\def\texorpdfstring#1#2{#1}
\title{Space-Efficient Biconnected Components and Recognition of
Outerplanar Graphs}
\author{Frank Kammer\inst{1}, Dieter Kratsch\inst{2}, and Moritz Laudahn\inst{1}}
\institute{Institut f\"ur Informatik, Universit\"at Augsburg, 86135
Augsburg, Germany
\email{\{kammer,moritz.laudahn\}@informatik.uni-augsburg.de}
\and
LITA, Universit\'e de Lorraine, Metz, France\\
\email{dieter.kratsch@univ-lorraine.fr}}
\begin{document}

\maketitle{}

\begin{abstract}
     We present space-efficient algorithms for computing cut vertices in
   a given graph with $n$ vertices and $m$ edges in
linear
 time using $O(n+\min\{m,n\log \log n\})$ bits. 
With the same time and using $O(n+m)$ bits, we can compute the 
  biconnected components of a graph. 
  We use this result to show an algorithm for the recognition of
(maximal) outerplanar graphs in $O(n\log \log n)$ time using $O(n)$
 bits.\vspace{1mm}

{\bf Keywords}\ \,graph algorithms,
space efficiency, cut vertices,
biconnected components, (maximal) outerplanar graphs 

\end{abstract}

\pagestyle{plain}
\thispagestyle{plain}

\section{Introduction}

Nowadays the use of small mobile devices like tablets and smartphones is ubiquitous. 
Typically they will not be equipped with large memory and common actions like storing
(many) pictures may even decrease the available memory significantly. 
This triggers the interest in data structures 
and algorithms being space-efficient. 
(Time) Efficient algorithms are a classical subject in computer science. 
The corresponding 
algorithms course is often based on the textbook of Cormen et al.~\cite{CorLRS09}.
There is also a long tradition in the development of algorithms that use as few bits as
possible;
famous results are the two results of Savitch~\cite{Sav70} and Reingold~\cite{Rei08} on reachability in directed
and undirected graphs, respectively. However, the running times of their
algorithms are far away from the fastest algorithms for that problem and are
therefore of small practical interest.
Moreover, Edmonds et al.~\cite{EdmPA99} have shown in the so-called
NNJAG model that only a slightly
sublinear working-space bound is possible for an algorithm that solves the reachability problem when required
to run in polynomial time.
This motivates the recent interest in
{\em space-efficient algorithms}, 
i.e., algorithms that use 
as few working space as possible under the condition that their running time (almost) matches
the running time of the best algorithm(s) without any space restrictions.

A useful model of computation for the development of algorithms in that
context is the word RAM with a read-only input, a read-write working memory,
and a write-only output. As usual, we assume that for a given instances of size
$n$, the word size is $\Omega(\log n)$.
One of the first problems considered for space-efficient algorithms is
sorting~\cite{MunP80}, which was finally solved to
optimality~\cite{Bea91,PagR98}.
Other researchers considered problems in geometry~\cite{AsaBBKMRS14,AsaMRW11,BarKLSS13}.
There are only few papers with space-efficient graph algorithms.
Asano et al.~\cite{AsaIKKOOSTU14} focused on depth-first search and Elmasry et
al.~\cite{ElmHK15} considered
depth-first search, breadth-first search, %
(strongly) connected
 components, topological sorting and shortest path. Moreover, Datta et
 al.~\cite{DatKM16}
 gave an space-efficient matching algorithm for sparse graphs.

We continue this work on space-efficient
      graph algorithms 
and consider the basic problems to compute the cut vertices and to 
decompose a given undirected graph into its
biconnected components. Tarjan's linear time algorithm~\cite{Tar72} solving
this problem uses a DFS and 
has
been implemented in almost any usual 
programming language. However 
the algorithm requires
$\Omega(n \log n)$ bits on $n$-vertex graphs even if we use a
space-efficient DFS. 
The idea of our algorithm is to classify the edges
via a DFS as tree and back edges and to mark those tree edges
that build a cycle with a back edge. The marking allows us subsequently to
determine the cut vertices and the biconnected components. Given a graph
with $n$ vertices and $m$ edges, the whole
algorithm runs in 
 $O(n+m)$
 time using $O(n+\min\{m,n\log \log n\})$ bits, which is $O(n)$ in
 sparse graphs.

Banerjee et al.~\cite{BanCR16} independently discovered a different
 approach to compute the cut vertices of a graph with $n$ vertices and $m$
 edges that uses $O(n+m)$ bits and time. They have no algorithm with a space
 bound that only depends on $n$.

Finally we study the recognition of outerplanar graphs, i.e., those graphs having a
planar embedding with all vertices on the outer face.  The  problem has been 
studied in various settings and linear time algorithms have been given,
e.g., by
Mitchell~\cite{Mit79} and Wiegers~\cite{Wie86}.  However, both
algorithms modify the given graph by removing vertices of degree $2$, which is not possible in our model. An
easy solution would be to copy the given graph in the working memory, but
this requires $\Omega(n \log n)$ bits for a graph with $n$ vertices. Another
problem is that if the neighbors of a removed vertex are not adjacent, then
both algorithms above want to add a new edge connecting the neighbors. Storing all these
new edges also
can require $\Omega(n \log n)$ bits.
 Our algorithm runs in time $O(n\log\log n)$ and uses 
 $O(n)$ bits, and determines if the input graph is outerplanar, as well
 as if it is maximal outerplanar. 

To obtain our algorithm, 
 we can not simply remove vertices of degree $2$. 
 With each removed vertex $v$
 we have to remove the
 so-called chain of vertices of degree $2$ that contains $v$ and we have to choose the chains
  carefully such that we have only very few %
new edges at a time in our
 graph. %

\section{Preliminary}

For graph-theoretic notions not defined in the paper we
refer to the monograph of Diestel~\cite{Die12}.
For basic notions in the description and the analysis of algorithms---%
e.g., computing {\em tree-} and {\em back edges} with a {\em depth-first
search} (DFS)---we refer to the textbook of Cormen et al.~\cite{CorLRS09}.
To develop space-efficient algorithms, the details of the
representation of an input graph are more important 
than in the classic setting because it is rarely
possible to modify and store a given representation.
We use the terminology of~\cite{ElmHK15}.
In particular, if we say that a graph is 
represented via {\em adjacency arrays}, then we assume that, 
given a vertex $u$ and an index $i$, we can determine the $i$th edge
$\{u,v\}$ of
$u$ in constant time. Moreover, {\em cross pointers} allow us to determine 
the index of $\{u,v\}$ in the adjacency array of $v$ in constant time.
As usual, we always assume that an $n$-vertex graph has vertices $V=\{1,\ldots,n\}$.

Our algorithms make
use of \emph{rank-select data structures}. 
A rank-select data structure is initialized on a bit sequence
$B=(b_1,\ldots,b_n)$ and then
supports the following two queries.
\begin{description}
\item[${rank}_B(j)$] ($j\in\{0,\ldots,n\}$):
Return $\sum_{i=1}^j b_i$
\item[${select}_B(k)$] ($k\in\{1,\ldots,\sum_{i=1}^n b_i\}$):
Return the smallest $j\in\{1,\ldots,n\}$
with ${rank}_B(j)=k$.
\end{description}
Rank-select data structures
for bit sequences of length~$n$ that support
rank and select queries in constant
time and occupy $O(n)$ bits can
be constructed in 
$O(n)$ time~\cite{Cla96}.

Assume that
$d_1,\ldots,d_n\in I\!\!N_0$
and that it is desired to allocate $n$ bit strings 
$A_1,\ldots,A_n$ such that, for $k=1,\ldots,n$,
$A_k$ consists of $d_k$ bits
and (the beginning of) $A_k$
can be located in constant time.
Take $N = \sum_{j=1}^{n} d_{j}$.
We say that $A_1,\ldots,A_n$ are stored
with {\em static space allocation}
if we allocate $A_1,\ldots,A_n$ within
an array of size~$N$.
In $O(n+N)$ time, we can compute the sums
$s_k=k + \sum_{j=1}^{k-1} d_j$ for $k=1,\ldots,n$ and
a rank-select data structure for the bit vector $B$ 
of size $n+N$ whose $i$th bit, for $i=1,\ldots,n+N$,
is $1$ exactly if 
$i=s_k$
for some $k\in\{1,\ldots,n\}$.
This allows us, given a $k\in\{1,\ldots,n\}$
to compute the number of bits
used by the arrays $A_{1}, \ldots, A_{k-1}$
and thus the location of $A_{k}$
in constant time
by evaluating ${select}_B(k)-k$.
One application of static space allocation---as already shown
by~\cite{HagKL15}---is to store data for each vertex $v$ consisting of
$O({\mathrm{deg}}(v))$
bits where ${\mathrm{deg}}(v)$ is the degree of $v$.
Given a vertex, we then can locate its data in constant time and the whole
data can be stored with $O(n+m)$ bits.

To maintain subsets of vertices
or of edge indices
we use a data structure
by Hagerup and Kammer~\cite[Lemma 4.4]{HagK16}
that is called a choice dictionary.
\begin{theorem}
 \label{thm:cd}
 Let~$n\in I\!\!N$.
 A choice dictionary
 is a data structure
 that maintains an initially empty subset~$S$
 of~$\{1, \ldots, n\}$
 under ${insertion}$, ${deletion}$, ${membership}$ queries, and
an operation called ${choice}$ 
 that returns an arbitrary element of~$S$.
 The data structure can be initialized in~$O(1)$ time
 using~$O(n)$ bits of working space and supports each of its
 operations in constant time. The choice dictionary can also be extended by
 an operation called          
 ${iteration}$ that returns all elements in $S$ in a time linear in $|S|$.
\end{theorem}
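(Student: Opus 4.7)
The plan is to realize the choice dictionary by combining word-level parallelism on a characteristic bit vector with the classical lazy-initialization trick of Briggs and Torczon, so that every operation, including the initial set-up, runs in constant time.

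First, I would partition $\{1,\ldots,n\}$ into $\lceil n/w\rceil$ blocks of $w=\Theta(\log n)$ consecutive integers, where $w$ is the word size. For each block I store a single machine word whose $i$th bit indicates whether the $i$th element of that block currently lies in $S$. Then ${membership}$, ${insertion}$, and ${deletion}$ on an element $j$ amount to reading or flipping one bit inside the word of block $\lceil j/w\rceil$, which takes constant time. Space for this layer is $n$ bits.

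To support ${choice}$ and ${iteration}$ I maintain a secondary structure that tracks which blocks are currently nonempty: an array $P[1,\ldots,\lceil n/w\rceil]$ of block indices together with a counter $t$ giving its current top, and for each block a back-pointer into $P$. A block $b$ is declared \emph{active} exactly if its back-pointer $q$ satisfies $1\le q\le t$ and $P[q]=b$; this double check is the Briggs--Torczon cross-reference. An insertion that turns an empty block nonempty pushes it onto $P$, sets the back-pointer, and clears the block's characteristic word; a deletion that empties a block removes it from $P$ by swapping with the top entry and updating the moved block's back-pointer. A ${choice}$ returns the lowest set bit of the characteristic word of block $P[1]$, found in $O(1)$ using standard word-RAM bit tricks, and ${iteration}$ walks $P[1\ldots t]$ and enumerates the set bits of each block's word, for total time $O(|S|)$. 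The whole secondary structure costs $O((n/w)\cdot\log n)=O(n)$ bits, so the overall bound is $O(n)$ bits.

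The main obstacle, and the reason O(1) initialization is delicate, is correctness of the lazy-initialization predicate: after ${initialize}$ merely sets $t:=0$, all block words and back-pointers may contain arbitrary garbage, yet we must never treat a block as nonempty based on stale data. This is handled by the invariant that any access to a block's characteristic word is gated by the active-block test $1\le q\le t\wedge P[q]=b$, and by the rule that the \emph{first} time a block becomes active we overwrite both its back-pointer and its characteristic word in the same $O(1)$ operation. Verifying this invariant through each of the operations, and arguing that no garbage configuration can spuriously pass the cross-check, is the real content of the proof; once it is in place, the time and space bounds are immediate.
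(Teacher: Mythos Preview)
The paper does not prove this theorem at all: it is quoted verbatim as a known result of Hagerup and Kammer~[HagK16, Lemma~4.4], with no accompanying argument. Your proposal therefore cannot be compared against a proof in the paper; rather, you are supplying a self-contained construction where the paper relies on a citation.

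That said, your construction is sound. Partitioning the characteristic vector into word-sized blocks and maintaining the set of nonempty blocks via a Briggs--Torczon sparse-set (stack~$P$, top counter~$t$, cross-checked back-pointers) is a standard and correct way to obtain $O(1)$ initialization and $O(1)$-time ${insertion}$, ${deletion}$, ${membership}$, and ${choice}$, together with $O(|S|)$-time ${iteration}$, all within $O(n)$ bits. One small point worth making explicit: your first description of ${membership}$ (``reading one bit inside the word of block $\lceil j/w\rceil$'') is only correct once guarded by the active-block test you mention later; a reader could otherwise object that an inactive block's word may be garbage. You clearly understand this, since you state the gating invariant further down, but the write-up would be cleaner if the guard were stated up front for all three primitive operations. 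With that adjustment, the argument is complete and would serve as an acceptable in-place proof of the theorem.
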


We also use a simplified version
of the ragged dictionary
that was introduced by 
of Elmasry et al.~\cite[Lemma~2.1]{ElmHK15}
and named in~\cite{HagK16}. 

\begin{theorem}\label{thm:ragged-dictionary}
For every fixed $n \in I\!\!N = \{1, 2, \ldots\}$ as well as integers
 $b = O(\log n)$ and $\kappa = O(n/\log n)$,
there is a dictionary
that can store a subset~$A$ of $\{1, \ldots, n\}$
with~$|A| \le \kappa$,
each $a \in A$ with a string $h_{a}$
of satellite data of $b$~bits,
in $O(n)$ bits
such that the following operations all run in $O(\log\log n)$ time:
$h_{a}$ can be inspected 
for each $a \in A$ and
elements with their satellite data
can be inserted in
and deleted from $A$.
\end{theorem}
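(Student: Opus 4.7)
The plan is a two-level structure that exploits the budget $\kappa\cdot\log n = O(n)$ to fit everything in $O(n)$ bits. I would first partition the universe $\{1,\ldots,n\}$ into $N=\Theta(n/\log n)$ consecutive blocks of size $\Theta(\log n)$, so that each $a\in\{1,\ldots,n\}$ decomposes canonically into a block index (an integer in $\{1,\ldots,N\}$, thus $O(\log n)$ bits) and a within-block offset ($O(\log\log n)$ bits). Call a block \emph{occupied} if it currently contains an element of $A$; there are at most $\kappa$ occupied blocks.

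At the top level I would maintain the set of indices of the occupied blocks, each tagged with an $O(\log n)$-bit pointer to a bottom-level record describing that block's contents. Since the set has size at most $\kappa$ in a universe of size $N$, a y-fast trie (Willard) supports membership, insertion, and deletion in $O(\log\log N)=O(\log\log n)$ time using $O(\kappa\log N)=O(n)$ bits, which is precisely the access structure we need. (Since predecessor queries are not required, any deterministic dictionary with the same resource bounds would also suffice.)

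At the bottom level, the record of a block with $s$ present elements consists of a $\Theta(\log n)$-bit occupancy bitmap over the offsets together with the $s$ satellite strings packed in bitmap-rank order, so the record has $\Theta(\log n+sb)$ bits. Because the bitmap and any constant group of satellite entries fit in a constant number of machine words, intra-block membership, rank computation, and the insertion or deletion of a bit and of a satellite slot are all handled in $O(1)$ time by standard word-parallel operations. Summed over occupied blocks the total is $O(\kappa\log n+|A|\cdot b)=O(n)$ bits, and each dictionary operation triggers at most one top-level trie update plus $O(1)$ bottom-level work.

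The main obstacle I expect is managing the dynamic allocation of the variable-sized bottom-level records within the $O(n)$-bit budget and without exceeding $O(\log\log n)$ time per operation. I would handle this with a standard variable-size allocator that rounds each requested size up to the next power of two (which leaves only $O(\log\log n)$ distinct size classes, since each block record has $O(\log^2 n)$ bits), maintains a free list per size class, and migrates a record to a larger piece when it grows past its current capacity. Such an allocator offers $O(1)$ amortized allocation and deallocation with only an $O(1)$-factor space blow-up, so the whole structure runs in $O(\log\log n)$ time per operation within the claimed $O(n)$ bits of working space.
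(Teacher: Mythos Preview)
Your two-level block decomposition matches the paper's, but there is a genuine gap in the bottom-level time bound. You store the satellite strings ``packed in bitmap-rank order'' and assert that inserting or deleting a satellite slot is $O(1)$ by word-parallel tricks. That is not true in the regime the theorem allows: a block may hold up to $\Theta(\log n)$ elements, each with $b=\Theta(\log n)$ bits of satellite data, so the packed satellite array occupies $\Theta(\log^2 n)$ bits, i.e.\ $\Theta(\log n)$ machine words. Inserting or deleting at rank~$r$ forces a shift of up to $\Theta(\log n)$ words, and each of those words must be read and written, so the intra-block update costs $\Theta(\log n)$ in the worst case. This dominates the $O(\log\log n)$ top-level cost and breaks the claimed bound. (The $O(1)$ claim would hold only if the whole block record fit in $O(1)$ words, which it does not once $b=\Theta(\log n)$.) The fix is to avoid rank-order packing: for instance, keep the satellites in an unordered pool of fixed $\Theta(\log n)$-bit cells and store, for each offset, a $\Theta(\log\log n)$-bit index into that pool; then insert appends and delete swaps with the last cell, both in $O(1)$.

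The paper sidesteps all three of your complications with a much plainer construction. At the top level it needs no y-fast trie: there are only $\Theta(n/\log n)$ blocks, so a direct array of one $O(\log n)$-bit root pointer per block already fits in $O(n)$ bits and gives $O(1)$ deterministic access. At the bottom level each block keeps an AVL tree over its at most $\Theta(\log n)$ keys, giving worst-case $O(\log\log n)$ search, insert, and delete. Finally, all AVL nodes (at most $\kappa$ of them, each $\Theta(\log n)$ bits for key, satellite, and two child indices) live in a single fixed-size array of $\kappa$ slots, with the free slots tracked by a choice dictionary; this eliminates any need for a variable-size allocator or amortization. Your y-fast trie also imports randomization (via hashing in the x-fast layer), whereas the paper's structure is fully deterministic with worst-case bounds throughout.
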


\begin{proof}
 The membership test
 can be realized using a bit vector of~$n$ bits.
 It remains the insertion,
 deletion and lookup of satellite data.
 We partition the set
 into $\Theta(n/\log n)$ subsets
 of~$O(\log n)$ elements each
 and maintain an AVL tree for every subset.
 An empty AVL tree
 uses $O(\log n)$ bits.
 Additional $O(\log n)$ bits are used
 for every node
 that is inserted.

 Each AVL tree
 is responsible
 for a subset of $\Theta(\log n)$
 possible keys; more exactly,
  the first AVL tree
 is responsible
 for the keys $\{1, \ldots, \log n\}$
 if they are ever inserted
 into the ragged dictionary,
 the second AVL tree
 is responsible
 for the keys $\{\log n + 1, \ldots, 2\log n\}$
 and so on.
 Hence,
 the time to search for a key
 is 
 $O(\log\log n)$.
 Each node~$u$ consists of a key,
 the satellite data
 that belongs to the key
 and two pointers
 that represent
 the left and the right child
 of~$u$.
 The nodes in the AVL tree
 are stored in an array~$D$
 of $\kappa$ values
 of $\Theta(\log n)$ bits each.
 The positions within~$D$
 that are currently not used
 are maintained within a choice dictionary.
\end{proof}

\section{Cut Vertices}
\label{sec:cv}

A {\em cut vertex\,} of a connected undirected graph $G$ is a vertex 
$v$ such that $G-v$ is disconnected. Furthermore a graph is
{\em biconnected} if it is connected and does not have a cut vertex. 
We first show how 
to compute cut vertices
in~$O(n+m)$ time using~$O(n+m)$ bits
on an undirected graph~$G=(V,E)$
with $n=|V|$ and $m=|E|$. 
Afterwards, we present a second algorithm
that has the same running time
and uses $O(n\log\log n)$ bits.

We start with the description
of an algorithm, but for the time being, do not care
on the running time and
the amount of working space.
Using a single DFS
we are able to classify all edges
as either tree edges or back edges.
During the execution
of a DFS
we call a vertex white
if it has not been reached
by the DFS,
gray if the DFS
has reached the vertex,
but has not yet retracted from it
and black if the DFS
has retracted from the vertex.
W.l.o.g.,
we assume
that all our DFS runs
are deterministic
such that every DFS run
explores the edges of~$G$
in the same order.
Let~$T$ denote
the {\em DFS tree} of~$G$, i.e., the subgraph of $G$ consisting only of tree
edges,
which we always assume
to be rooted at the start vertex of the DFS.
We call a tree edge~$\{u,v\}$ of~$T$
with~$u$ being the parent of~$v$
{\em \mbox{full marked}}
if there is a back edge
from a descendant of~$v$
to a strict ancestor of~$u$,
{\em \mbox{half marked}}
if it is not {\mbox{full marked}}
and there exists a back edge
from a descendant of~$v$
to~$u$,
and {\em \mbox{unmarked}}, otherwise.
Then one can easily prove
the next lemma.
\begin{lemma}\label{lem:find-cut}
 Let~$T$ denote a DFS tree
 of a graph~$G$
 with root~$r$,
 then the following holds:
 \begin{enumerate}
  \item Every vertex~$u \neq r$
   is a cut vertex of~$G$
   exactly if
   at least one of the edges
   from~$u$ to one of its children
   is either an {\mbox{unmarked}} edge or a {\mbox{half marked}} edge.
  \item The vertex~$r$
   is a cut vertex of~$G$
   exactly if it has
   at least two children
   in~$T$.
 \end{enumerate}
\end{lemma}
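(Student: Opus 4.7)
The plan is to reduce the lemma to the well-known DFS characterization of cut vertices, rephrased in the terminology of full marked, half marked and unmarked tree edges. By definition, a tree edge $\{u,v\}$ with $v$ a child of $u$ is not full marked exactly when no back edge leaves the subtree $T_v$ rooted at $v$ to a vertex strictly above $u$ in $T$; this equivalence is the bridge between the marking and the cut-vertex property and will be used throughout.

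Part 2 will follow directly from the fact that every non-tree edge of a DFS on an undirected graph is a back edge and therefore joins a vertex to one of its ancestors in $T$. Thus two distinct subtrees hanging from the root $r$ are joined in $G$ only through $r$, so $r$ is a cut vertex iff it has at least two children in $T$.

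For part 1 I would argue each direction separately. For the ``if'' direction, assume $\{u,v\}$ is unmarked or half marked for some child $v$ of $u$; then no back edge from $T_v$ reaches a strict ancestor of $u$. Any path in $G$ from a vertex $w \in T_v$ to a strict ancestor of $u$ must eventually leave $T_v$, and the only edges leaving $T_v$ are the tree edge between $u$ and $v$ together with the back edges from $T_v$ to $u$; since all of these have $u$ as their outside endpoint, every such path must visit $u$. Removing $u$ therefore disconnects $T_v$ from the rest of $G$, and $u$ is a cut vertex.

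For the ``only if'' direction, assume $u$ is a cut vertex and let $C$ be a component of $G-u$ that does not contain $r$. Each vertex of $C$ is a proper descendant of $u$ in $T$, because otherwise the tree path from that vertex to $r$ would avoid $u$. Moreover, as soon as one vertex of a subtree $T_v$ (with $v$ a child of $u$) lies in $C$, the tree edges within $T_v$ pull all of $T_v$ into the same component, so $T_v \subseteq C$. Pick such a $v$. Since $C$ is separated from every strict ancestor of $u$ in $G-u$, no back edge can connect $T_v$ to a strict ancestor of $u$; hence $\{u,v\}$ is not full marked. The main obstacle I anticipate is precisely this componentwise analysis: one has to be careful to conclude that a component of $G-u$ that avoids $r$ is a union of whole DFS subtrees hanging from $u$, and this is exactly where the property that every non-tree edge of an undirected DFS is a back edge is used.
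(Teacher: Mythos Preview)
Your argument is correct and essentially mirrors the paper's: both proofs hinge on the same observation that a tree edge $\{u,v\}$ is not full marked precisely when every edge leaving the subtree $T_v$ has $u$ as its other endpoint, and Part~2 is handled identically via the absence of cross edges in an undirected DFS. The only cosmetic difference is in the ``only if'' direction of Part~1: the paper argues the contrapositive (if all child edges are full marked then $u$, its parent, and all its children lie in one biconnected component), whereas you argue directly by picking a component of $G-u$ not containing $r$ and locating a child $v$ with $T_v$ inside it---your component analysis is in fact more explicit than the paper's one-line appeal to biconnectedness.
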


\begin{proof}
 We consider a vertex~$u$
 with a child~$v$.
 To proof the first part
 we assume that~$u \neq r$.
 The edge~$\{u,v\}$ is {\mbox{full marked}}
 exactly if there is a back edge
 from~$v$ or a descendant of~$v$
 to a strict ancestor of~$u$.
 If the edges from~$u$
 to all of its children
 are {\mbox{full marked}},
 then~$u$,
 the children of~$u$,
 and the parent of~$u$
 all belong
 to the same biconnected component
 and~$u$ can not be a cut vertex.
 If~$\{u, v\}$ is
 either
 {\mbox{unmarked}} or {\mbox{half marked}},
 then there is no back edge
 from~$v$ or a descendant of~$v$
 to a strict ancestor of~$u$
 and the only path from~$v$
 to the parent of~$u$ is through~$u$,
 thus~$u$ is a cut vertex.

 For the second part of the lemma
 we observe
 that the tree edges
 that are incident to~$r$
 can not be {full marked}
 since~$r$ does not have an ancestor.
 If~$r$ has only one child~$v$,
 $r$ can not be a cut vertex
 because, if~$r$ is removed from the graph,
 all remaining vertices 
 still belong to the same connected component
 since they belong to the same subtree of~$T$
 that is rooted at~$v$.
 Otherwise,
 if~$r$ has multiple children in~$T$,
 we denote the first and the second child
 of~$r$
 to be explored by the DFS
 that constructed~$T$ as~$v$ and~$w$, respectively.
 The only path from~$v$ to~$w$
 is through~$r$,
 because otherwise~$w$
 would have been explored
 before the DFS
 retreated from~$v$ back to~$u$.
\end{proof}

Since we want to
use the lemma above,
we have to mark the tree edges
as {\mbox{full marked}}, {\mbox{half marked}} or {\mbox{unmarked}}.
Therefore,
we have two DFS runs.
In the first run,
we classify all tree edges, which we initially unmark.
During the second,
whenever we discover a back edge~$\{w, u\}$
with~$w$ being a descendant of~$u$,
it becomes evident
that both~$u$ and~$w$
belong to the same biconnected component
as do all vertices
that are both,
descendants of~$u$ and ancestors of~$w$,
since they induce a cycle~$C$.
Let~$v$ be the ancestor of~$w$
that is child of~$u$.
We mark the edge from~$u$ to~$v$
as {\mbox{half marked}}
and all the other tree edges on~$C$
as {\mbox{full marked}}.
If the edge~$\{u,v\}$
has already been {\mbox{full marked}}
in a previous step,
this will not be overwritten.
Note that
if~$\{u,v\}$ is {\mbox{half marked}}
and~$u$ has a back edge
to one of its ancestors
such that the edge connecting~$u$
to its parent~$p$
becomes {\mbox{full marked}},
$v$ and~$p$ do not belong
to the same biconnected component,
but~$u$ belongs to both,
the biconnected component containing~$v$
and the biconnected component containing~$p$.
The notion to distinguish
between {\mbox{half marked}} and {\mbox{full marked}}
is to indicate this gap
between biconnected components.

After the second DFS
each cut vertex can be determined
by the edge markings
of the tree edges
connecting it to its children.
For a space-efficient implementation,
the first DFS can be taken from~\cite{ElmHK15}. The second one with 
making the markings is described in the next subsection.
To obtain a space that does not depend on the number of edges,
we combine the two DFS executions
to one DFS execution (Subsection \ref{sec:cv:sea-n}).

\subsection{Cut Vertices with~\texorpdfstring{$O(n+m)$}{O(n+m)} bits}
\label{sec:cv:sea-mn}
We use static space allocation
to address $O(\mathrm{deg}(v))$ bits for each vertex~$v$
of degree~$\mathrm{deg}(v)$.
Within these bits,
we store for every edge~$\{u,v\}$
adjacent to~$v$ if
(1) it is a tree or back edge, (2)
$u$ or~$v$ is closer
to the root
of the DFS tree,
and (3) its markings in case 
it is a tree edge.
Additionally,
using~$O(\log deg(v))$ bits,
when a vertex~$v$ is encountered
by the DFS
for the first time,
we store the position
of the tree edge~$\{u,v\}$
in the adjacency array of~$v$,
where~$u$ denotes the parent of~$v$
in the DFS tree.
This information
can later be used,
when the DFS retreats from~$v$,
such that we can perform
the DFS
without explicitly having to store
the DFS stack.
(This idea to obtain a DFS in $O(n+m)$ time using
$O(n+m)$ bits was already described by Hagerup et al.~\cite{HagKL15}.)
To bound the time
of marking the tree edges
during the second DFS 
by~$O(n+m)$,
we perform the following steps, which are also sketched in Fig.\ref{fig:cv:m1}.
Whenever we encounter a vertex~$u$
for the first time
during the second DFS
via a tree edge,
we scan its entire adjacency array
for back edges
that lead to descendants $w$ of~$u$.
For each such back edge~$\{u,w\}$,
we perform the following substep.
As long as there is a tree edge
from~$w$ to its parent~$v \neq u$
that is not {\mbox{full marked}},
we mark~$\{v,w\}$ as {\mbox{full marked}}
and continue
with~$v$ becoming
the new~$w$.
If we encounter an edge~$\{v,w\}$
that is already marked as {\mbox{full marked}},
we terminate the substep
without marking any more edges.
If at some point the parent $v$ of $w$ becomes $u$,
we mark~$\{v,w\}$ as {\mbox{half marked}}
if it has not been {\mbox{full marked}}, yet,
and terminate the substep thereafter.

It is easy to see
that this procedure
results in the correct markings
for every edge.
The order in which back edges are worked on
assures whenever a tree edge~$\{v,w\}$
is already {\mbox{full marked}}
so are all tree edges
between~$v$ and the child of~$u$.
Because we store,
for each vertex,
the position of the edge
that connects it
with its parent in~$T$,
the time of each substep
is at most the number of tree edges
that are marked
plus additional constant time.
Since each tree edge
is marked at most twice,
once as {\mbox{half marked}}
and once as {\mbox{full marked}},
and the number of back edges
is at most~$m$,
altogether 
we use $O(m)$ time.

\begin{figure}[h]
 \centering%
 \includegraphics[scale=0.70]{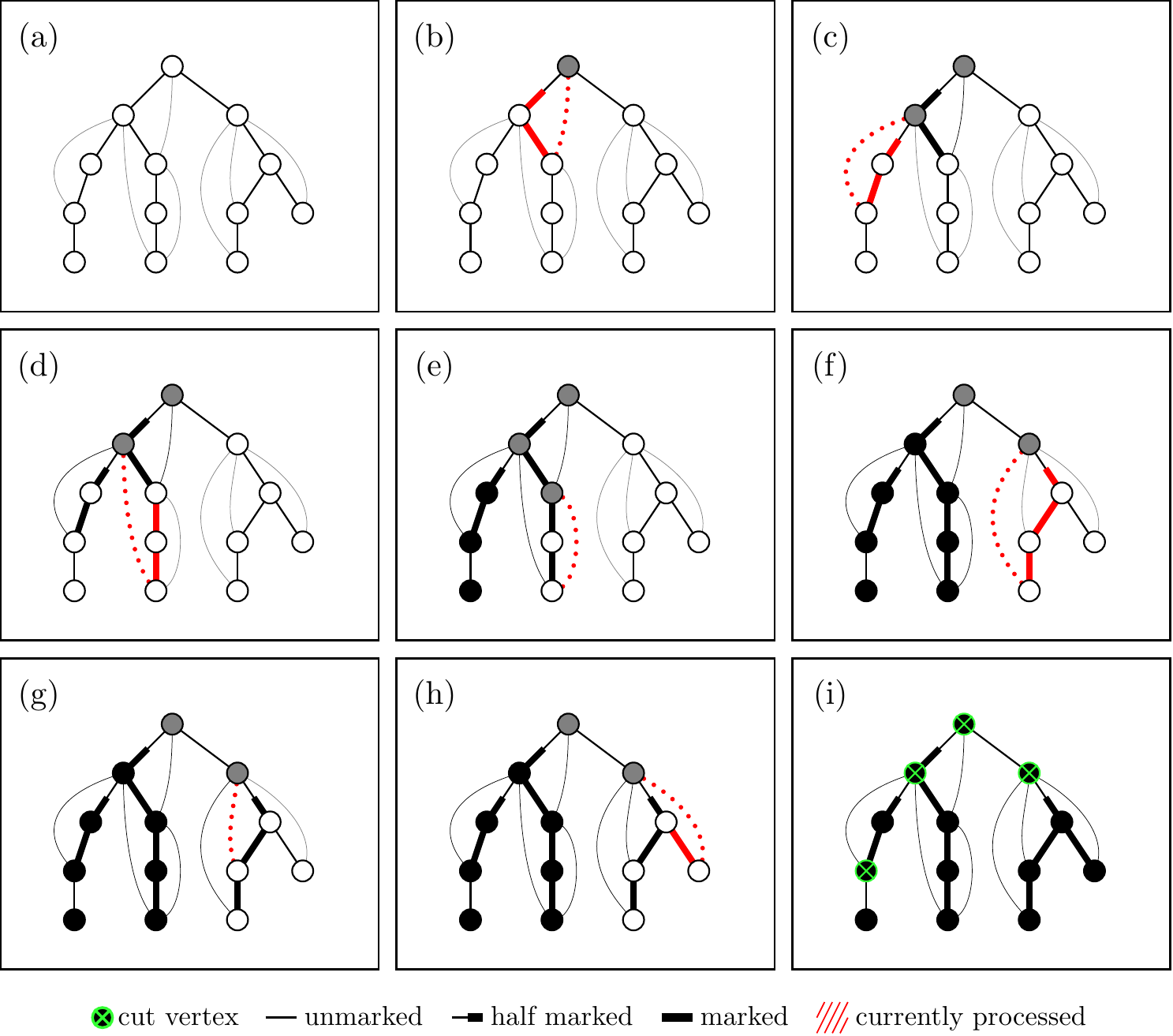}%
 \caption{%
Assume that a first DFS has marked the edges as tree- or back edges. 
   Snapshots are shown while we
 run a second DFS---the process of the DFS is shown by the vertex
 colors---and while we visit a vertex $v$, we iterate over all back edges
 connecting $v$ and a descendent $u$ of $v$. Whenever we consider a back edge
 (shown dotted), we mark all tree edges on the path from $v$ to $u$ (edges
 get bold). Note that in (d), we stop the marking when reaching the first
 marked edge. In (e) and (g), we have no new
 markings at all. The cut vertices are determined in (i).
 }
 \label{fig:cv:m1}%
\end{figure}

\subsection{Cut Vertices with~\texorpdfstring{$O(n\log\log n)$}{O(n*loglog(n))} bits}
\label{sec:cv:sea-n}
If the number of edges is sufficiently high,
the limitations on the working space
neither allow us
to mark all edges as tree edges or back edges at once
nor to store for every vertex~$v$
the position of the tree edge~$\{u,v\}$
to its parent~$u$
in~$T$. 
The key ideas
to make the algorithm more space-efficient
are to perform
the classification of edges
as tree edges or back edges
on the fly
whenever an edge
is explored
by the DFS,
to use 
a second stack~${\cal U}$,
and to apply the stack restoration technique
by Elmasry et al.~\cite{ElmHK15}
to both stacks. We thus 
reconsider that paper.
To obtain a DFS with a linear running time using $O(n \log \log n)$ bits,
the stack is partitioned into $O(\log n)$ segments
of $\Theta(n/\log n)$ entries, each consisting basically of a vertex. 
During the DFS, only the $O(1)$ latest
segments are kept in the working memory;
the remaining are thrown away.
Whenever a segment that was thrown away is needed,
a restoration recovers it in a time linear in the number of vertices of the segment. 
To restore a segment in a time linear to its size,
the vertices of the $i$th segment
have a {\em hue} (value) $i$.
For a slight modification of that algorithm,
one can easily see that it is not important
that the segments consist of $\Omega(n/\log n)$ vertices stored in the
stack, as long as their number
is bounded by $O(\log n)$ and 
the vertices with the same hue 
form a connected subsequence
of the sequence of elements
in the stack. 
Therefore, we build segments not based on the
entries of the stack; instead, we define the first $\Theta(n/\log n)$
vertices that are 
visited by the DFS as the first segment, the next $\Theta(n/\log n)$
vertices as the next segment, etc.
The hue values are determined via an extra DFS at the
beginning. They are stored 
in addition to the colors white, gray and
black that are used
during a ``standard'' DFS.
The space consumption of the algorithm sums up %
to $O(n)$ for the
segments plus
$O(n\log\log n)$ bits for the hue. %

A normal DFS stack~${\cal S}$
contains at any moment
during the execution
of a DFS
the vertices
on the path within~$T$
from its root
to the vertex
that is currently explored,
which are all the vertices
that are currently gray.
The depth $dp_{w}$ of a vertex~$w$
is the number of edges connecting the vertices
on the path from~$w$ to the root~$r$
of~$T$
that consists solely of tree edges.
The second stack~${\cal U}$
contains those gray vertices~$u$
that have a gray child~$v$
such that~$e_{u}=\{u,v\}$
is not {\mbox{full marked}}.
Hence, the vertices in~${\cal U}$
denote a subset of the vertices in~${\cal S}$.
More exactly, each entry of~${\cal U}$
is a tuple of a vertex~$u$
and its depth within~$T$.
When~$v$ is explored,
$v$ is pushed onto~${\cal S}$
and its parent~$u$ (together with its depth)
is pushed onto~${\cal U}$.
When the DFS
retreats from~$v$,
it is popped from~${\cal S}$
and~$u$
is popped from~${\cal U}$
if it is still present in~${\cal U}$.
A second way for~$u$
to be removed from~${\cal U}$
is when the edge~$e_{u}$
becomes {\mbox{full marked}}. %

Let~$w$ denote the vertex
at the top of~${\cal S}$.
Whenever the DFS
tries to explore a vertex~$u$
that is gray,
then~$\{w,u\}$ is a back edge.
Under the assumption
that we know
the depth of every vertex, 
we can mark edges as follows:
While there is a vertex~$u'$
on the top of~${\cal U}$
whose depth is higher than~$dp_{u}$,
we full mark the edge~$e_{u'}=\{u',v'\}$
that connects~$u'$ with its gray child~$v'$
and pop~$u'$ from~${\cal U}$.
This loop stops
if either~$u$ becomes the vertex
at the top of~${\cal U}$
or a vertex with a lower depth than~$u$
becomes the top vertex of~${\cal U}$.
In the first case,
we half mark~$e_{u}$.
In the second case,
we do not mark the edge~$\{u,v\}$
that connects~$u$ with its gray child~$v$
since $u$ is not on~${\cal U}$
because the edge~$\{u,v\}$
must have been {\mbox{full marked}}
during the processing
of a previous back edge.

\begin{figure}[b]
 \centering%
 \includegraphics[scale=0.70]{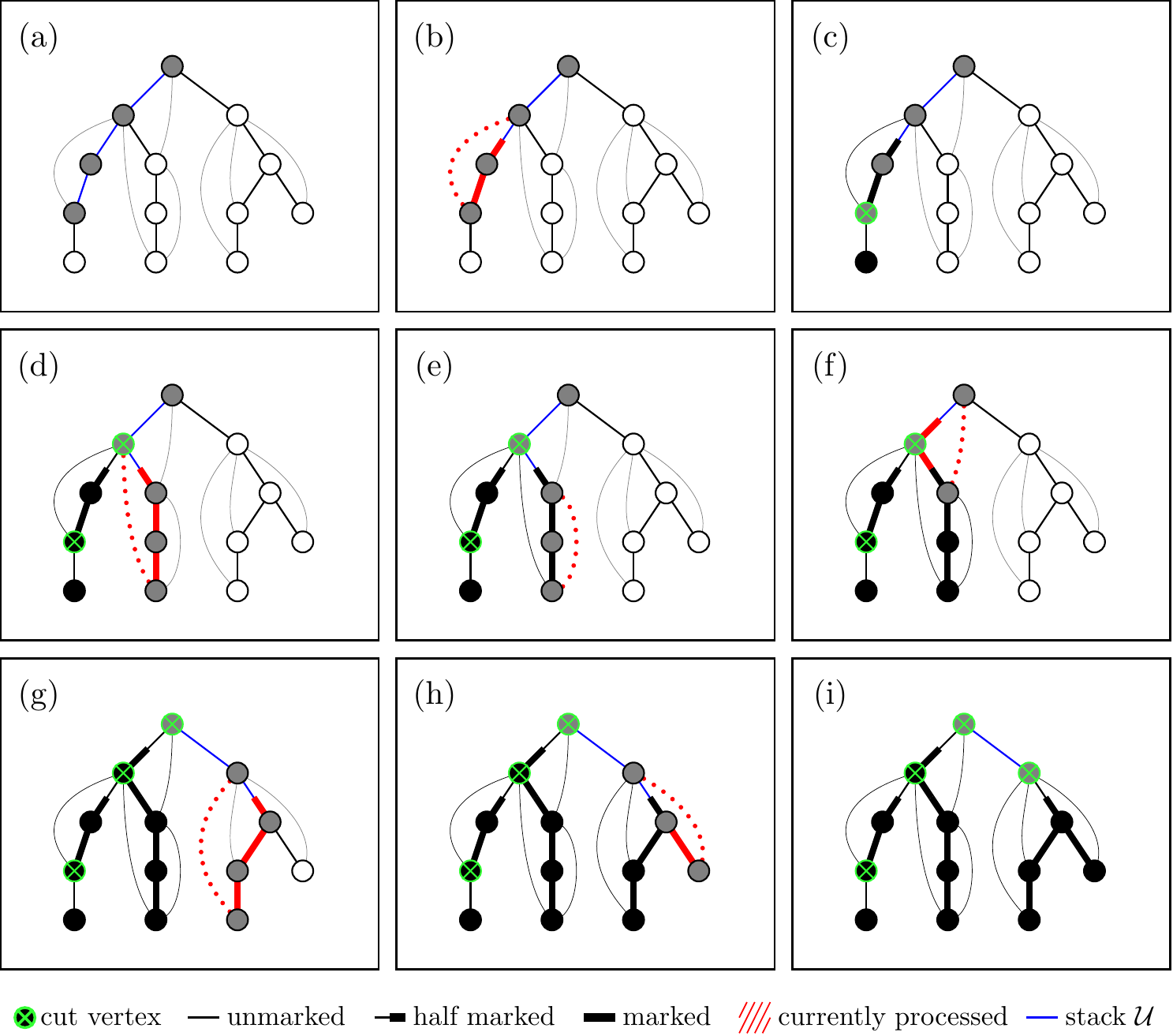}%
 \caption{%
  Snapshots of the DFS
  that determines which vertices are cut vertices.
  (a) shows the graph right before the first back edge is processed.
  All the upper endpoints of tree edges
  that are on the path from the root
  to the vertex that is currently processed
  are part of the second stack ${\cal U}$.
  In (b) the back edge is processed,
  which results in the removal
  of the (upper) endpoints of {\mbox{full marked}} tree edges from ${\cal U}$
  and the tree edge that belongs to the thereafter topmost vertex in ${\cal U}$
  becoming {\mbox{half marked}}.
  In (c) the DFS retreats over an {\mbox{unmarked}} tree edge
  to a non-root vertex $u$,
  and $u$ is identified as a cut vertex.
  Later in (d) the DFS has retreated over a {\mbox{half marked}} tree edge,
  which resulted in another vertex being identified as a cut vertex.
  The root itself is determined to be a cut vertex
  as soon as a second subtree is explored (in (g)).}
  \label{fig:cv:m2}%
\end{figure}

We now discuss
the computation of cut vertices, which is also sketched in Fig.~\ref{fig:cv:m2}.
When retreating from a vertex~$v$
to its parent~$u$
that is not the root of~$T$,
we check
if~$\{u,v\}$ is {\mbox{half marked}} or {\mbox{unmarked}}.
If that is the case,
we output~$u$ as a cut vertex.
For the root vertex~$r$,
we maintain a counter
that indicates if
at least two children of~$r$
have been explored
during the DFS.
If so,
$r$ is outputted as a cut vertex.
Using a bit vector
over the vertices,
we can easily avoid
outputting a cut vertex more than once.

For the implementation of the algorithm,
the remaining problems are 
maintaining both stacks~${\cal S}$ and~${\cal U}$ as well as
determining the depth of~$u$,
whenever processing a back edge~$\{w, u\}$
with~$u$ being the ancestor of~$w$
in the DFS tree~$T$.
For the first problem,
we store for every vertex
its hue and
use the stack restoration techniques
introduced by Elmasry et al.~\cite{ElmHK15},
but use it with the modified 
size
as described above.
Restorations of segments in~${\cal S}$ and~${\cal U}$
are performed independently of each other.

The second problem is more complicated and considered now.
Let $k$ be the hue of vertices that we currently process, and let 
$Z$ be the set consisting of
every hue~$i\neq k$ that is 
present in~${\cal U}$. 
Our goal is to store the depth of all vertices
in ${\cal U}$ with 
a hue in $\mathrm{max}(Z) \cup \{k\}$
such that it can be addressed by the vertex.
The idea is to use one array $A$ 
addressed with static space allocation to store the depth
for all vertices in ${\cal U}$ of one segment, i.e., one hue.
 This allows us to build and destroy
the arrays for each hue independently. However, the rank-select
 structure used by the static space allocation 
is to slow in its construction since we want a running time of
 $O(n/\log n)$. Therefore, we 
define {\em blocks} where block 
$i\in \{0,\ldots,\lceil n/\lceil (\log n)/2 \rceil\rceil-1\}$
consist of the vertices
$1+i\lceil (\log n)/2 \rceil ,\ldots,(i+1) \lceil (\log n)/2 \rceil$---the last block may be smaller.
In an auxiliary array $B$
we store, for each block $b$,
a pointer to the first entry in $A$
that contains the depth %
regarding a vertex in $b$.
For each vertex within a block $b$, we use table lookup to find the
position relative to the first
entry for a vertex in $b$.
This allows us to store 
the depth of the vertices in the upcoming segment
in an array with static space allocation.
We also 
restore the depth
of the %
vertices of a segment in ${\cal U}$
whenever we restore it. %

When processing a back edge~$\{u, w\}$
($w$ having a hue~$k$)
with~$u$ being the ancestor of~$w$
and having a hue~$i$
there are two possibilities:
If~$i\in \mathrm{max}(Z) \cup \{k\}$, 
then the depth of~${\cal U}$
can be determined in constant time.
Otherwise, $i < j= \mathrm{max}(Z)$
and,
we iteratively restore those segments in $\{j-1, j-2, \ldots, i\}$
that have a vertex in~${\cal U}$
and process the vertices within these segments
that are present in~${\cal U}$
as described above
until we finally restore
the segment of vertices with hue~$i$
together with their depth.

We summarize the space bound as follows.
For every vertex
we store in~$O(n)$ bits
its current color (white, gray, black), and
if it yet has been outputted
as a cut vertex. The set $Z$ can be easily maintained with $O((\log n)^2)$ bits.
Using~$O(n\log\log n)$ bits
we can store for every vertex
its hue.
We use additional~$O(n)$ bits
to store a constant number of stack segments
of~$O(n/\log n)$ vertices each.
Since the depths of those vertices are stored
compactly in arrays, each such array $A$ together with its auxiliary
array $B$ can be implemented
with $O(n)$ bits.
Moreover, 
we use two bits for each vertex $u$ on the
stack to store if the edge connecting $u$ and its child on the stack, if any, has
been half or fully marked. 
Thus, the overall space bound is $O(n\log\log n)$ bits.

We finally determine the time bounds.
As analyzed in~\cite{ElmHK15}, the DFS including the stack restorations of~${\cal S}$,
but without the extra computations for the back
edges
runs in $O(n+m)$ time
as do the intermediate computations in total.
Assume for the moment, that we do not throw away and restore segments of
${\cal U}$.
Then, the total time to
mark the tree edges due to the back edges is $O(n+m')$ where $m'$ is the
number of back edges since each tree edge is marked at most twice using the
stack ${\cal U}$. The tests if a vertex is a cut vertex can be performed in
total time $O(m)$.
It remains to bound the time for the restorations of ${\cal U}$. 
Whenever we restore a segment of ${\cal U}$, a hue value is removed from the
set $Z$ and never returns. This means that we have only $O(\log n)$
restorations of %
a segment of ${\cal U}$, which can be done in total time
$O(n)$. %

Combining the algorithms of Section~\ref{sec:cv:sea-mn} and
this section, we obtain our first
theorem.
\begin{theorem}
 There is an algorithm that,  
 given an $n$-vertex $m$-edge graph~$G$
 in an adjacency array representation
 with cross pointers, runs in time $O(n+m)$ and uses $O(n+\min\{m,n\log \log n\})$ bits
 of working space, and
 determines the cut vertices of $G$.
\end{theorem}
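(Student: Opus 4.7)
The plan is purely to dispatch between the two algorithms already constructed in Sections~\ref{sec:cv:sea-mn} and~\ref{sec:cv:sea-n}, choosing the one whose space bound is better on the given instance. The algorithm of Section~\ref{sec:cv:sea-mn} uses $O(n+m)$ bits, which is $O(n+\min\{m,n\log\log n\})$ exactly when $m=O(n\log\log n)$; the algorithm of Section~\ref{sec:cv:sea-n} uses $O(n\log\log n)$ bits, which dominates the right-hand side when $m=\omega(n\log\log n)$. Hence a single threshold test on $m$ versus roughly $n\lceil\log\log n\rceil$ determines which subroutine to call.

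To make the dispatch itself space-efficient, I would first walk the adjacency arrays to count edge slots, but abort the count as soon as it exceeds the threshold $c\cdot n\lceil\log\log n\rceil$ for the constant $c$ hidden in the second algorithm's $O(n\log\log n)$ bound. This uses only $O(\log n)$ bits beyond the read-only input and takes $O(n+m)$ time in the worst case, but terminates in $O(n\log\log n)$ time whenever we end up in the dense regime. With the decision in hand, I invoke the corresponding algorithm as a black box; its time is $O(n+m)$ and its space meets the required bound in its regime, so overall we obtain time $O(n+m)$ and working space $O(n+\min\{m,n\log\log n\})$.

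I do not anticipate a genuine obstacle here: the two subroutines have already been analyzed, they share the same interface (adjacency arrays with cross pointers as input, cut vertices written to the output), and the threshold test is a routine use of a counter. The one detail worth stating explicitly is that each cut vertex is written to the output at most once; this is already ensured inside each subroutine by a single-bit-per-vertex guard, so no additional merging logic is needed. With these pieces in place, the statement of the theorem follows immediately from the correctness and resource bounds established in Sections~\ref{sec:cv:sea-mn} and~\ref{sec:cv:sea-n}.
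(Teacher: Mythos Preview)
Your proposal is correct and matches the paper's approach: the theorem is stated immediately after the two subsections with the single remark that combining the algorithms of Sections~\ref{sec:cv:sea-mn} and~\ref{sec:cv:sea-n} yields the result. You have simply spelled out the dispatch step (threshold test on $m$ versus $n\log\log n$) that the paper leaves implicit.
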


\subsection{Biconnected Components}
\label{sec:bc}

We next show that we can compute the biconnected components of an undirected
graph. 
Recall that a graph is biconnected if it is connected 
and has no cut vertices.
A biconnected component of a graph $G$ is a maximal biconnected  
induced subgraph of $G$.  
Whenever we say
in the following theorem and proof that we output an edge $\{x,y\}$, then we mean that we 
output the
index of the edge in the adjacency array of both $x$ and $y$.

\begin{theorem}
There is a data structure that, given an $n$-vertex $m$-edge graph $G$ 
in an adjacency array representation
 with cross pointers,
runs $O(n+m)$ initialization time and uses $O(n+m)$ bits of working space
and that afterwards, given an edge $e$, computes the vertices and/or the edges of the biconnected
component~$B$ of $G$ with $B$ containing $e$ in 
a time that is linear in the 
number of vertices and edges that are output.
\end{theorem}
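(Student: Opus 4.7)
The plan is to augment the output of the cut-vertex algorithm of Section~\ref{sec:cv:sea-mn} with a little rank--select data so that the unique biconnected component $B$ containing a query edge $e$ can be streamed out. During initialization I run that algorithm, which in $O(n+m)$ time and $O(n+m)$ bits classifies every edge as tree or back, records for each edge which endpoint is the parent (respectively the deeper endpoint), marks every tree edge as \emph{full}, \emph{half} or \emph{un}-marked, and stores for every vertex the position of its parent tree edge inside its own adjacency array. Still within the $O(\mathrm{deg}(y))$ bits reserved per vertex~$y$ by static space allocation, I additionally allocate two bit vectors of length $\mathrm{deg}(y)$: $F_y$ flags the adjacency positions of full-marked tree edges whose parent is~$y$, and $U_y$ flags those of back edges whose deeper endpoint is~$y$. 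Attaching a constant-time rank--select structure~\cite{Cla96} to each keeps the total space at $O(n+m)$ bits and the build time at $O(n+m)$.

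The query relies on the following observation, which follows from the definitions of full- and half-marked in Section~\ref{sec:cv:sea-mn} together with the fact that two distinct biconnected components share at most one vertex. Each biconnected component~$B$ has a unique topmost vertex~$t_B$ in the DFS tree and a unique child~$v$ of~$t_B$ such that the \emph{top edge} $\{t_B,v\}$ of~$B$ is half-marked or unmarked; every remaining tree edge of~$B$ lies strictly below~$v$ and is full-marked, and every back edge of~$B$ has its deeper endpoint in $V(B)\setminus\{t_B\}$. To locate $t_B$ and $v$ I start at the deeper endpoint~$\ell$ of~$e$ (the child, for a tree edge; the descendant, for a back edge) and iteratively climb one tree edge upwards via the stored parent position, inspecting each crossed edge; by a short case analysis on the three forms of~$e$, every intermediate edge is full-marked and the climb ends exactly at $\{t_B,v\}$ after at most $|V(B)|-1$ steps.

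If $\{t_B,v\}$ is unmarked, then $B$ is the bridge $\{t_B,v\}$ and I output $t_B$, $v$ and the edge. Otherwise I perform a recursive DFS starting at~$v$ that never visits~$t_B$: at every visited vertex~$y$ I output~$y$, use ${select}_{F_y}(1), {select}_{F_y}(2),\ldots$ to emit each full-marked child tree edge and recurse on its other endpoint, and---if edges are requested---use ${select}_{U_y}$ analogously to emit each back edge whose deeper endpoint is~$y$. By the structural observation no edge of another component is ever emitted, and every edge of~$B$ other than $\{t_B,v\}$ is emitted exactly once (tree edges at their parent, back edges at their deeper endpoint, both of which lie in $V(B)\setminus\{t_B\}$). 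Summing per-vertex work, each tree edge of~$B$ other than the top edge is counted once in some $F_y$ and each back edge of~$B$ once in some $U_y$, so the total cost, including the climb, is $O(|V(B)|+|E(B)|)$; it collapses to $O(|V(B)|)$ when only vertices are requested, because the $U_y$ sweep can then be skipped.

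The step I expect to require most care is the interior cut-vertex case: such a vertex $y\in V(B)\setminus\{t_B\}$ has extra incident edges leading into other biconnected components hanging off~$y$, and a naive scan of $y$'s full adjacency would cost $\Theta(\mathrm{deg}(y))$, which can be asymptotically larger than $|V(B)|+|E(B)|$. The rank--select structures $F_y$ and $U_y$ are precisely what bypass this obstacle, and verifying that they enumerate exactly the edges of~$B$ incident to~$y$---nothing from a neighbouring component, nothing missed---is the technical heart of the correctness proof and rests entirely on the structural observation above.
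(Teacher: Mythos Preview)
Your approach is essentially the paper's: the same initialization (run the $O(n+m)$-bit algorithm of Section~\ref{sec:cv:sea-mn} and attach per-vertex rank--select over exactly the two edge classes you call $F_y$ and $U_y$, namely full-marked child tree edges and back edges to ancestors) and the same query idea (stay inside $B$ by following parent pointers upward and full-marked child edges downward, with the first non-full tree edge signalling the top of $B$). The paper merely fuses your climb-up and your downward DFS into a single DFS from the deeper endpoint of $e$ that may step to the parent as well as to full-marked children and stops exploring once a vertex is entered via a half-marked or unmarked edge; the set of vertices visited and edges emitted is identical to yours.

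One small slip in the write-up: in the half-marked case your DFS emits every vertex of $V(B)\setminus\{t_B\}$ and, as your own accounting sentence already notes, every edge of $B$ \emph{other than} $\{t_B,v\}$---so you still have to output $t_B$ and the top edge $\{t_B,v\}$ explicitly, just as you do in the bridge case.
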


\begin{proof}
To initialize the data structure, first run our algorithm from
Section~\ref{sec:cv:sea-mn} to compute a DFS tree $T$
with a root $r$, for each edge $\{u,v\}$
the ancestor-descendant relationship in $T$ between $u$ and $v$ as well as the markings
of the tree edges. 
Also store for each vertex $v\neq r$ the index of its edge connecting $v$ to
its parent.
Then build a rank-select
data structure for each vertex $v$ that allows us to iterate over the fully
marked tree edges and the back~edges to ancestors of $v$ in $O(1)$ time per edge.
It is easy to see that the initialization runs in $O(n+m)$ time and all
information can be stored with $O(n+m)$ bits
using static-space allocation.

Afterwards, given an edge $e=\{u,v\}$
with $u$ being an ancestor of $v$, we can output the biconnected
component $B$ containing $e$ by 
running a DFS from $v$ and traversing only
tree edges (to both directions, parent and children) 
such that 
we never explore new vertices from a vertex that was reached by a half
marked or unmarked edge and such that we never
move via an unmarked or half marked edge from a parent to its child.
During the DFS, we output all visited vertices as well as all traversed tree
edges.
In addition, whenever we visit a vertex $v$ that was reached via a fully
marked edge, we output all back edges $\{u,v\}$ with $u$ ancestor of
$v$.
The algorithm can easily be modified such that it only outputs the
vertices/edges of
the biconnected component.
Using the rank-select data
structures, this can be done in the time stated in the lemma.

To see that each outputted $B$ is indeed a biconnected component observe that 
$B$ is connected. Assume that 
$B$ has a 
cut vertex $v$ with a child $u$ that cuts off 
the subtree $T_u$ with root $u$, and we want to output the component containing
the edge connecting $v$ and its parent. 
In that case the edge $\{v,u\}$ is half marked or not marked. 
Hence, such an edge 
is not used to go from a parent to a child. On the other hand, if 
we want to output a component $B$ 
for which a vertex $v$ is a cut vertex disconnecting $B$ and $r$, then the edge 
connecting $v$ with the rest of $B$ is half marked.
Hence, we output $v$, but we do not explore
any other edges from $v$.

If a vertex $v$ was reached by a fully marked edge, then
the edge from $v$ to its parent belongs the currently outputted biconnected
component $B$. Since each back edge $\{u,v\}$ with $u$ ancestor of $v$ always
belongs to the biconnected component $B'$
that contains all tree edges $u$ to $v$,
$B=B'$ exactly if $v$ was reached by a fully marked edge.

Finally, $B$ is maximal by construction
since all tree edges
that are in $B$ are fully marked
except for the ``highest'' edge in $T$
that is half marked; 
thus all vertices of $B$ are found by the DFS from above 
starting at $v$.
\end{proof}

\section{Outerplanar Graphs}

Outerplanar graphs and
maximal outerplanar graphs  are well-studied  
subclasses of planar graphs.  For  their structural 
properties we refer to the monograph of 
Brandst\"adt et al.~\cite{BraLS99}. 

Given a biconnected outerplanar graph~$G=(V,E)$,
we call an edge
that is incident to the outer face
an \emph{outer edge}
and an edge
that is incident to two inner faces
an \emph{inner edge}.
We now describe
a set of well-known properties
for outerplanar graphs
that help us
to describe and prove our algorithm
of Section~\ref{sec:bop}. 
Every maximal outerplanar graph
with at least three vertices
is biconnected
and, for every biconnected outerplanar graph~$G$,
the set of outer edges
induces a unique Hamiltonian cycle
that contains all vertices of~$G$.
Every biconnected outerplanar graph~$G=(V,E)$
with $V = \{1, \ldots, n\}$ 
and $|E| > n$
contains at least one inner edge.
Let the vertices of~$G$ be labeled
according to their position
on the Hamiltonian cycle of~$G$,
by $1, \ldots, n$,
and let~$\{u,v\}$ denote an inner edge
that connects the vertices~$u$ and~$v$.
W.l.o.g., let $u < v$.
Then the graph $G'=G[\{u, \ldots, v\}]$
is biconnected and outerplanar.
Since $\{u,v\}$ is an inner edge,
$1 < v-u < n-1$ holds
and there are exactly $v-u-1>0$ vertices
between~$u$ and~$v$
on the path part of the Hamiltonian cycle
that belongs to~$G'$.
This path together with the edge $\{v,u\}$
forms the Hamiltonian cycle of~$G'$.

Usually,
one decomposes an outerplanar graph
by repeatedly removing
a vertex~$v$ of degree~$2$.
However,
this needs to test
if the neighbors of~$v$
are connected by an edge
and, if not, to add
such an edge.
Because the test 
is too time consuming
and 
storing all such edges needs
too much space,
we search instead
for a closed or good chain
defined next.

We define a \emph{chain}\footnote{\cite{Sys79}
uses instead of chain
the term \emph{maximal series of edges}.}
in an outerplanar graph~$G$
as either a cycle
that consists solely of vertices of degree~$2$
or a path
that contains
at least three pairwise distinct vertices
with the property
that its first and its last vertex
have a degree larger than~$2$
while the rest 
must have degree~$2$.
We denote the first and the last vertex of a chain
as its \emph{endpoints}
unless the chain~$C$ is a cycle,
in which case
the endpoints can be chosen arbitrarily
as long as they are adjacent to each other.
Furthermore,
we call a cycle a \emph{loop}
if it contains one vertex of degree larger than~$2$
and all 
other vertices have a degree~$2$.
A chain
is called a \emph{good chain}
if one of its endpoints has a degree
of at most~$4$.
Let us call a face~$F$
\emph{induced} by a chain~$C$
if the endpoints~$u$ and~$v$ of~$C$
are adjacent to each other
and~$C$ together with the edge~$\{u,v\}$
is the boundary of~$F$.
We denote a chain~$C$
that induces a face~$F$
as a \emph{closed chain}.
For simplicity,
we sometimes consider a chain
also as a set of edges.

\begin{lemma}\label{lem:bop:good}
 Let~$G=(\{1, \ldots, n\},E)$ denote
 a biconnected outerplanar graph
 with $n \ge 3$ vertices.
 Then~$G$ contains a good closed chain~$C$.
\end{lemma}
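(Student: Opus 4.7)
The plan is to contract the chains of $G$ into edges, obtaining a smaller auxiliary biconnected outerplanar graph $G^c$ on the vertices of degree at least $3$, and then to deduce the existence of a good closed chain from the mere existence of a degree-$2$ vertex in $G^c$. If $G$ has no inner edge then, by the properties recalled above, $G$ is the outer $n$-cycle; the whole cycle is itself a chain of degree-$2$ vertices and, viewed together with any one of its edges, bounds the unique inner face, so we immediately obtain a good closed chain. From now on I may assume that $G$ has at least one inner edge.

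Let $H$ denote the set of vertices of degree at least $3$ in $G$ and set $h=|H|$. I would introduce $G^c$ as the simple graph on vertex set $H$ whose edges are: every inner edge of $G$; every outer edge of $G$ with both endpoints in $H$; and, for every chain of $G$, an edge joining its two endpoints, collapsing any parallel edges that arise. A short analysis of the outerplanar embedding shows that $G^c$ is simple (the only parallel edges arise from closed chains and are merged), that its outer cycle descends to a Hamiltonian cycle on the $h$ vertices of $H$, and that $G^c$ is therefore biconnected and outerplanar. A direct count of the edges at a vertex then yields the key identity $\deg_{G^c}(v)=\deg_G(v)-p_v$ for every $v\in H$, where $p_v\in\{0,1,2\}$ denotes the number of closed chains of $G$ having $v$ as an endpoint.

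I would then invoke the standard fact that every biconnected outerplanar graph on at least three vertices contains a vertex of degree exactly $2$, which is immediate from the weak-dual-tree description of the inner faces: each leaf of that tree is a face bounded by one inner edge and a path of outer edges all of whose interior vertices have degree $2$. Applied to $G^c$ in the case $h\ge 3$, this produces a vertex $v^*\in H$ with $\deg_{G^c}(v^*)=2$, so $\deg_G(v^*)=2+p_{v^*}$. The inequality $\deg_G(v^*)\ge 3$ (since $v^*\in H$) forces $p_{v^*}\ge 1$, and together with $p_{v^*}\le 2$ gives $\deg_G(v^*)\le 4$; hence any of the closed chains of $G$ at $v^*$ is a good closed chain. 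The degenerate cases are settled directly: $h=1$ is impossible because any inner edge forces both of its endpoints into $H$, and $h=2$ forces the two vertices of $H$ to be joined by an inner edge, producing two closed chains whose endpoints each have degree $3$.

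The main obstacle I anticipate is carefully verifying the structural claims about $G^c$, namely simplicity and biconnectedness. The simplicity is subtle because distinct chains could in principle share endpoints, but outerplanarity together with the absence of multi-edges in $G$ rules this out once $h\ge 3$. Biconnectedness is preserved because the chain interiors consist entirely of degree-$2$ vertices, whose removal does not create cut vertices as long as the chain endpoints remain joined — which they do, through the edge that replaces the chain in $G^c$.
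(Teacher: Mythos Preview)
Your argument is correct, and it takes a genuinely different route from the paper.

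The paper proceeds by induction on $|V|$. It first reduces (at the very end of its proof) to the case where no two degree-$2$ vertices are adjacent --- which is exactly your hypothesis $G=G^c$ --- and then, given any interior vertex $u$ of degree $\ge 3$, either finds a ``long'' inner edge $\{u,v\}$ with $|u-v|>2$ and recurses into the induced subgraph on $\{u,\ldots,v\}$, or observes that $u$ has degree $\le 4$ and an inner edge at distance two, yielding the good closed chain directly. So the paper effectively walks down the dual tree one inner edge at a time until it hits a leaf.

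You instead contract chains up front to obtain $G^c$, then invoke the weak dual tree once to produce a degree-$2$ vertex $v^*$ of $G^c$, and read off the conclusion from the clean identity $\deg_{G^c}(v^*)=\deg_G(v^*)-p_{v^*}$. The two arguments are cousins --- both ultimately exploit that the dual tree has a leaf --- but yours is more conceptual and avoids the explicit induction; the paper's version is more self-contained, since it does not appeal to the existence of a degree-$2$ vertex as a black box but rather reproves it along the way. Your treatment of the boundary cases $h\le 2$ and your justification of simplicity of $G^c$ (no parallel edges beyond those coming from closed chains once $h\ge 3$) are exactly the places that need care, and your proposal flags them correctly.
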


\begin{proof}
 If~$G$ is a cycle
 the whole graph is a good closed chain.
 Otherwise,
 $G$ has at least one inner edge. 
 W.l.o.g., assume that vertices are numbered
 such that the vertices of the Hamiltonian cycle of~$G$
 are $1, \ldots, n$ in this order.
 We first assume
 that there are no two vertices of degree~$2$
 adjacent to each other
 and prove
 that then there is always a good chain~$C$
 with a vertex~$v_{g}$ of degree~$2$
 adjacent to an endpoint~$v_{e}$ of degree~$\le 4$
 such that~$v_{g},v_{e} \in \{2, \ldots, n-1\}$.
 We prove the lemma by induction.
 The base case
 is a biconnected outerplanar graph
 with four vertices and five edges.
 Depending on the start of the numbering
 either vertex~$2$ or~$3$
 is a vertex of degree~$2$
 and the other one has degree~$3$
 and is the endpoint of a good closed chain
 containing the first one.
 We now consider
 a biconnected outerplanar graph~$G$
 with~$n>4$ vertices
 and assume
 that the lemma holds
 for any graph
 with at least~$4$
 and less than~$n$ vertices.
 We consider an arbitrary vertex~$u \in \{2, \ldots, n-1\}$
 of degree~${\ge 3}$.
 There are two cases to consider:
 If~$u$ has a \emph{long edge}~$\{u,v\}$
 that is an inner edge to a vertex~$v$
 such that~$|u-v| > 2$,
 then the set~$\{u, \ldots, v\}$
 induces a biconnected outerplanar subgraph~$G'$ of~$G$
 of at least four vertices.
 Since~$\{u, \ldots, v\}$
 is a proper subset of $\{1, \ldots, n\}$,
 $G'$ has less vertices than~$G$.
 The only vertices
 whose degree in~$G'$
 is different from their degree in~$G$
 are~$u$ and~$v$,
 which are now neighbors on the Hamiltonian cycle in $G'$.
 So after 
 a renumbering such that $u$ becomes vertex~$1$
 and $v$ becomes vertex~$n'$,
 we conclude by induction that
 $G'$ and thus~$G$
 contain vertices~$v_{e}$ and~$v_{g}$.
 Notice that,
 if the degree of~$u$ is greater than~$4$,
 then~$u$ has at least one inner edge~$\{u,v\}$
 to an vertex~$v$
 such that~$|u-v| > 2$.
 Hence, in the second case,
 if there is no long inner edge
 incident to~$u$,
 the degree of~$u$ is at most $4$
 and~$u$ has an inner edge
 such that $|u-v|=2$.
 There exists exactly one vertex~$w$
 between~$u$ and~$v$
 on the Hamiltonian cycle.
 Thus, $\{\{u, w\}, \{w, v\}\}$ is a good closed chain.
 We found~$v_{e}=u$ and~$v_{g}=w$.

 When we remove that assumption
 that no two vertices of degree~$2$
 are adjacent to each other,
 the same proof holds
 for general biconnected outerplanar graphs
 with at least one inner edge
 if one uses only one single number
 for the labels of all vertices of degree~$2$
 that belong to the same chain
 on the Hamiltonian cycle.
\end{proof}

The next two lemmas are used to show the correctness of our algorithm.
For each set of edges~$E'$
we define $graph[E']$
as the graph with vertex set $\{v \mid v \textrm{ is endpoint of edge in } E'\}$
and edge set $E'$.

\begin{lemma}
 \label{lem:bop:prop}
 Let~$G=(V,E)$ be
 a biconnected outerplanar graph
 with at least three vertices,
 then the following properties hold:
 \begin{description}
  \item{(i)} For every chain~$C$ in $G$
   with endpoints~$v_{1}$ and~$v_{2}$,
   the graph~$G' := graph[(E\setminus C)\cup  \{\{v_{1},v_{2}\}\}]$
   induced by the edge set $(E\setminus C) \cup \{\{v_{1},v_{2}\}\}$
   is biconnected and outerplanar.
  \item{(ii)} For all $v_{1}, v_{2} \in V$,
   there are at most two internal vertex-disjoint paths
   with at least two edges each.
 \end{description}
\end{lemma}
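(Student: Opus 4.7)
My plan for (i) is to view the operation as ``contracting'' the chain $C$ to a single edge. Since every interior vertex of $C$ has degree exactly $2$ and lies on the Hamiltonian cycle of $G$ (which consists of the outer edges, as stated in the excerpt), both edges incident to such a vertex are outer edges. Hence, in any outerplanar embedding of $G$, the chain $C$ traces a contiguous arc of the outer boundary from $v_1$ to $v_2$. To exhibit an outerplanar embedding of $G'$, I would take such an embedding of $G$, delete the interior chain vertices together with the edges of $C$, and draw the edge $\{v_1,v_2\}$ along the freed arc if it is not already present; every remaining vertex still lies on the new outer face. For biconnectivity, I would show that any two vertices $x,y$ of $G'$ are joined by two internally vertex-disjoint paths: two such paths exist in $G$, and whenever one of them visits an interior vertex of $C$, the degree-$2$ constraint forces it to enter at one endpoint of $C$ and leave at the other, so this segment can be rerouted through the single edge $\{v_1,v_2\}$ without losing vertex-disjointness with the other path. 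The degenerate case where $G$ is itself a cycle is handled separately, since $G'$ then collapses to a single edge on two vertices.

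For (ii), I would argue by contradiction. Suppose there exist three pairwise internally vertex-disjoint $v_1v_2$-paths $P_1,P_2,P_3$, each of length at least two. Pick an internal vertex $w_i$ on each $P_i$. Splitting $P_i$ at $w_i$ produces two subpaths of length at least one, and the six subpaths obtained over $i=1,2,3$ are pairwise internally vertex-disjoint by construction. Together they form a subdivision of $K_{2,3}$ in $G$ with bipartition $\{v_1,v_2\}$ on one side and $\{w_1,w_2,w_3\}$ on the other, contradicting the well-known characterization that outerplanar graphs contain no $K_{2,3}$ minor. Alternatively, one could argue from the Hamiltonian-cycle structure: the cycle gives at most two internally disjoint paths between $v_1$ and $v_2$, and any third would have to cross an inner edge, which can be shown to force a $K_{2,3}$ minor anyway.

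The step I expect to require the most care is the rerouting argument for biconnectivity in (i). One must verify precisely that any path in $G$ visiting an interior chain vertex is forced, by the degree-$2$ restriction, to enter the chain at one endpoint and exit at the other, traversing it entirely, and then confirm that replacing this traversal by the single edge $\{v_1,v_2\}$ does not cause the rerouted path to share an internal vertex with the other of the two disjoint paths. Once this structural observation is stated cleanly, the remainder of both (i) and (ii) is mechanical.
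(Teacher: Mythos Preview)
Your proposal is correct and follows essentially the same route as the paper's proof: for biconnectivity in~(i) you reroute the two internally vertex-disjoint paths through the new edge $\{v_1,v_2\}$, exactly as the paper does, and for~(ii) you derive a $K_{2,3}$ minor by contradiction, again matching the paper.

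The one place where the paper is slicker is the outerplanarity half of~(i). You argue directly by exhibiting an outerplanar embedding of $G'$ (deleting the chain and drawing $\{v_1,v_2\}$ along the freed boundary arc). The paper instead observes in one line that $G'$ is a \emph{minor} of $G$: contracting all but one edge of the chain collapses $C$ to the single edge $\{v_1,v_2\}$, and since outerplanarity is minor-closed, $G'$ is outerplanar. You even hint at this yourself (``contracting the chain $C$ to a single edge''), so you were one step away from the shortcut. Your embedding argument is perfectly valid, just longer; the minor argument buys brevity and avoids any discussion of where edges sit in the plane.
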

\begin{proof}
 To prove the first part
 note that~$G'$ is a minor of~$G$.
 Thus, $G'$ is outerplanar.
 For all vertices $u_{1}, u_{2}$ of~$G'$,
 there are two internal vertex-disjoint paths
 in~$G$.
 Since~$C$ is a chain,
 it can only be a complete part
 of such a path and
 therefore replaced by the edge~$\{v_{1}, v_{2}\}$.
 It follows there are
 two internal vertex-disjoint paths
 between~$u_{1}$ and~$u_{2}$ in~$G'$
 and~$G'$ is biconnected.
 We prove the second part by contradiction.
 Assume that Prop.~ii
 does not hold.
 Then $K_{2,3}$ is a minor of~$G$.
 This is a contradiction
 to~$G$ being outerplanar.
\end{proof}

\begin{lemma}
 \label{lem:bop:proof}
 Let~$G=(V,E)$ be a graph
 for which the following properties hold
 for all vertices $v_{1}$ and $v_{2}$
 and all chains~$C$ in~$G$ with endpoints $v_{1}$ and $v_{2}$.
 \begin{description}
  \item{(1)} %
   The graph~$G' := graph[(E \setminus C) \cup \{\{v_{1},v_{2}\}\}]$
   is biconnected and outerplanar.
 \item{(2)} There are at most
   two internal vertex-disjoint paths
   with at least $2$ edges each in~$G$
   that connect~$v_{1}$ and~$v_{2}$.
 \end{description}
 Then~$G$ is biconnected and outerplanar.
\end{lemma}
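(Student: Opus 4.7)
The plan is to pick a chain~$C$ in~$G$ with endpoints~$v_1,v_2$, apply property~(1) to the smaller graph $G':=graph[(E\setminus C)\cup \{\{v_1,v_2\}\}]$, which is biconnected and outerplanar, and then reconstruct~$G$ from~$G'$ by re-inserting~$C$ in place of the edge~$\{v_1,v_2\}$. Property~(2) will then ensure that this re-insertion preserves outerplanarity.

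If~$C$ is a cycle chain then every vertex of~$G$ has degree~$2$ and lies on~$C$, so~$G=C$ is itself a cycle and is both biconnected and outerplanar. Otherwise~$C$ is a path chain and~$G$ is obtained from~$G'$ by subdividing the edge~$\{v_1,v_2\}$ with the degree-$2$ internal vertices of~$C$ (or, if~$\{v_1,v_2\}$ was already an edge of~$G$, by attaching~$C$ as a parallel path alongside this edge). Biconnectedness of~$G$ then follows immediately, because both subdivision of an edge and the addition of an internally disjoint path between two vertices of a biconnected graph preserve biconnectedness.

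For outerplanarity, the key technical step uses that~$G'$, being biconnected outerplanar with at least three vertices, admits a unique Hamiltonian cycle~$H'$ bounding its outer face, so every edge of~$G'$ is either on~$H'$ or is a chord. I would argue that~$\{v_1,v_2\}$ cannot be a chord: were it one, the two arcs of~$H'$ between~$v_1$ and~$v_2$ would each contain an interior vertex and hence have length~$\ge 2$, and since these arcs only use edges in~$E\setminus C$ they persist as two internally vertex-disjoint paths in~$G$. Together with the chain~$C$ itself (length~$\ge 2$, with internal vertices disjoint from~$V(G')$), this would yield three internally vertex-disjoint paths of length~$\ge 2$ between~$v_1$ and~$v_2$ in~$G$, contradicting property~(2). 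Hence~$\{v_1,v_2\}$ is an outer edge of~$G'$, and re-inserting~$C$ along the outer face of~$G'$ extends the embedding to an outerplanar embedding of~$G$ in which every vertex of~$G$, including the new interior vertices of~$C$, lies on the outer face.

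The main obstacle is the degenerate case in which~$G$ contains no chain at all, so that the hypotheses of the lemma are vacuous. This case is not covered by the chain-reduction argument above and has to be handled by the algorithmic context in which the lemma is invoked---typically by restricting attention to graphs large enough that a chain must exist whenever they are biconnected outerplanar, or by verifying biconnectedness and outerplanarity directly on the small remaining graphs (such as single edges, triangles, and short cycles).
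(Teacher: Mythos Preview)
Your argument is essentially the paper's: apply Property~(1) to get $G'$ biconnected outerplanar, use Property~(2) to force $\{v_1,v_2\}$ onto the outer face (the paper phrases this as ``at most one path remains in $E\setminus C$'', you phrase it as the contrapositive ``a chord would yield three paths''), and then re-embed $C$ in the outer face; the paper certifies biconnectedness by extending the Hamiltonian cycle of $G'$ through $C$, whereas you use preservation under subdivision, but these are equivalent here.

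Your remark about the no-chain case is apt: the paper appeals to Lemma~\ref{lem:bop:good} for the existence of a chain, which is formally circular since that lemma already assumes $G$ is biconnected outerplanar, so just as you say the lemma is really only used in the algorithmic context where a chain has actually been found.
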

\begin{proof}
 The existence of a
 chain
 follows from Lemma~\ref{lem:bop:good}.
 $G'$ is biconnected outerplanar
 because of Prop.~1.
 Because of Prop.~2 
 there is at most one internal vertex-disjoint path
 connecting~$v_{1}$ and~$v_{2}$ in~$graph[E\setminus C]$.
 Hence, the edge~$\{v_{1}, v_{2}\}$
 is part of the outer face of~$G'$.
 It follows
 that we can embed~$C$
 in the outer face
 of an embedding of~$G'$
 to yield an outerplanar embedding of~$G$.
 Because~$\{v_{1}, v_{2}\}$
 is part of the outer face of~$G'$,
 it is part of the Hamiltonian cycle of~$G'$.
 We can extend the Hamiltonian cycle of~$G'$
 by replacing~$\{v_{1}, v_{2}\}$
 with~$C$
 to get a Hamiltonian cycle of~$G$.
 Thus~$G$ is biconnected.
\end{proof}

\subsection{Our Algorithm on Biconnected Outerplanar Graphs}
\label{sec:bop}
For the time being, we assume that the given graph is biconnected.
Our algorithm works in two phases
and can be sketched as follows.
Before our actual algorithm starts
we test whether~$m \le 2n-3$.
If not, the graph is not outerplanar
and we terminate immediately.
Otherwise we start our algorithm
that modifies the input graph into a smaller outerplanar graph
as described in Lemma~\ref{lem:bop:proof} Prop.~1
while checking Prop.~2
of Lemma~\ref{lem:bop:proof}.
Lemma~\ref{lem:bop:prop} guarantees
that 
it does not matter which chain we take,
the modification
is always possible
and that the check
never fails unless
$G$ is not biconnected outerplanar.
A main obstacle
is to handle the edges 
replacing the chains
that we call subsequently \emph{artificial edges}.

To check Prop.~2
of Lemma~\ref{lem:bop:proof},
we keep in both phases
counters~$P_{e}$ for every (original or artificial) edge~$e$
to count the number of internal vertex-disjoint paths
with at least $2$ edges
that connect the endpoints of~$e$.
Whenever we remove a chain with both endpoints in $e$, we increment
$P_{e}$.
If~$P_{e}$
at any time exceeds~$2$,
the graph can not be outerplanar
by Lemma~\ref{lem:bop:prop}~(ii).
We check this
after every incrementation
of an counter~$P_{e}$
and terminate eventually. %

We start to sketch the two phases of our algorithm. The
details of the phases are given subsequently.
Let~$G=(V,E)$ denote the input graph,
which is located in read-only input space,
and $G'$ denote the subgraph of~$G$
that we are currently considering
in our algorithm.
Initially, $G' := G$.
Thereafter,
within our algorithm
the edges of chains
are either removed from~$G'$
if they induce a face
or replaced by artificial edges
that connect the endpoints of the chain directly.
Consequently, $G'$ is always a minor of~$G$.

The purpose of the first phase
is to limit the number
of artificial edges
that are required
for the second phase
to $O(n/\log n)$.
The first phase
consists of $\Theta(\log\log n)$ \emph{rounds},
in each of which
we iterate over all chains, but only
remove the closed ones.

In the second phase,
we repeatedly take a vertex of degree 2 and determine its chain. Depending on
the kind of the chain, we proceed:
Good closed chains are removed
from~$G'$.
The edges of chains
that turn out to be
good, but not closed
are replaced
by an artificial edge
connecting its endpoints.
The counter of vertex-disjoint paths
with the same endpoints~$\{u,v\}$
for a newly created artificial edge~$\{u,v\}$
is initialized with~$1$
to account for the chain
that has been replaced by~$\{u,v\}$.
When processing a chain~$C$
that turns out to be not good,
we implement a \emph{shortcut}
that is a pair of pointers,
each one addressed
by the vertex of degree~$2$
in~$C$
that is next to a endpoint of~$C$
and pointing to
the vertex of degree~$2$ in~$C$
that is adjacent to the other endpoint.
This way,
when the degree
of either one of the endpoints
is lowered to~$2$
by the removal of adjacent chains
and thus~$C$
becomes connected with another chain,
$C$ does not have to be traversed again
to check 
if the new chain is good.

If the input graph is outerplanar,
the algorithm terminates
either in Phase~$1$ or in Phase~$2$
as soon as the last good chain,
which is a cycle,
is processed and a single edge
that is the edge between its endpoints
remains.
Otherwise,
if the input graph is not outerplanar,
at some moment 
one of the checks fails
and the algorithm stops 
and answers that the input is not 
biconnected outerplanar.
Possible conditions
for a failed check are
if no good chain remains
and the graph has at least vertices,
if the counter
that counts the internal vertex-disjoint paths
with at least $2$ edges
between the endpoints of an edge
exceeds~$2$ for some edge,
if some loop is detected,
or if a vertex turns out to be incident
to at least three vertices of degree~$2$. %

To lower the time
that is required to iterate through the adjacency array
of a vertex~$v$,
we initialize,
for each vertex~$v \in V$
with degree $deg_{G}(v)$ in~$G$,
a choice dictionary
with universe
$\{1, \ldots, deg_{G}(v)\}$
that represents the edges
that are adjacent to~$v$
and still present within the current subgraph.
The choice dictionary 
contains~$i$ with $i \in \{1, \ldots, deg_{G}(v)\}$
exactly if the $i$th entry of the adjacency array of $v$
represents an edge of the input graph~$G$
that is still present in the current graph~$G'$.
Thus, future iterations
through the adjacency array of~$v$
can be performed within a time
that is linear in the number of edges
that are incident on~$v$ in~$G'$.
Hence, the time to determine
the adjacency of two endpoints~$v$ and~$w$
of a chain
is bounded by $O(\min\{deg_{G'}(v),deg_{G'}(w)\})$.

{\bf Phase~$1$:}
At the beginning of each round
we unmark all vertices,
which might have been marked
during the previous round,
and insert all vertices of degree~$2$
of the current graph~$G'$
into a choice dictionary~$\hbox{\textit{D}}$.

For a more detailed description
we subdivide the $\Theta(\log\log n)$ rounds
into stages.
As long as
there is a vertex~$u$ in~$\hbox{\textit{D}}$,
we extract it from~$\hbox{\textit{D}}$
and start a new stage.
Let~$C$ denote the chain
that~$u$ belongs to.
A stage works as follows.
If~$C$
contains a vertex
that is marked as \hbox{\textit{tried}}
or~$C$ is not closed,
the stage stops.
Otherwise,
we increment the counter~$P_{e}$
for every edge~$e$
on~$C$ as well as
for the edge connecting its endpoints
and remove
the 
vertices and edges
of~$C$
from the graph.
If the current chain~$C$
is not the first chain
processed in the current stage so far,
we mark every vertex of degree~$2$
that has been processed
within the current stage
and still remains in the graph
as~\hbox{\textit{tried}}.
If one or both of the endpoints of~$C$
thereby become 
a vertex of degree~$2$,
the procedure is repeated immediately
for the new chain~$C'$
that incorporates
both endpoints of~$C$.
The stage ends if
no new chain is found.
At the end of the stage
we remove the vertices of degree~$2$
that have been processed
within the current stage
from~$\hbox{\textit{D}}$.
A round ends if $\hbox{\textit{D}}$ is empty.

{\bf Phase~$2$} (illustrated in Fig.~\ref{fig:op:p2}){\bf:}
We 
start to initialize a choice dictionary $\hbox{\textit{D}}$
consisting of one vertex of degree~$2$ for each 
good closed chain.
(During Phase 2, vertices of other chains may be added into $D$.)
Take $q$ as the initial number of vertices in $D$.
As we 
prove in Corollary~\ref{cor:dual:leafbound},
the number of chains
that induce a face
is at most 
$O(n/\log n)$ after Phase~$1$ unless $G$ is not outerplanar.
If~$G'$ contains more such chains,
we can terminate immediately.
Otherwise,
we can store a constant number
of artificial edges and shortcuts
for each such chain.
Whenever we process a vertex~$u$
from $\hbox{\textit{D}}$,
we determine
the chain~$C$
that~$u$ is part of, the endpoints $v$ and $w$ of $C$,
and then distinguish three cases. %
\begin{itemize}
\item If $C$ is good and closed,
we increment the counter~$P_{e}$
for every edge~$e$
that is part of~$C$
and for the edge
that connects its endpoints.
Finally, we remove the edges of~$C$
from~$G'$. 

\item If
$C$ is good and not closed,
we insert an artificial edge~$e_{a}$
that connects the endpoints of~$C$,
increment the counter~$P_{e}$
for every edge~$e$ in~$C$ by~$1$
and 
remove all the edges and inner vertices of~$C$
from~$G'$.
Finally, the counter~$P_{e_{a}}$
of the artificial edge~$e_{a}$
is initialized with~$1$.

\item Otherwise,
$C$ is not good.
We remove all vertices of $C$ from $D$ and
insert a shortcut
between the vertices
of degree~$2$ in~$C$
that are adjacent to
the endpoints of~$C$. 
If there are old shortcuts 
connecting other inner vertices of $C$, they become obsolete and are
removed. %

\end{itemize}
Whenever 
the degree of 
$v$ or $w$
decreases,
we perform the following subroutine. 
If $deg_{G'}(v)\in \{3,4\}$, 
we insert all vertices $u$ of degree~$2$
that are adjacent to~$v$
into~$D$.
Note that $u$ and $w$
are neighbors of~$v$ on the unique Hamiltonian Cycle of~$G'$.
Thus, if~$G$ is biconnected outerplanar, there is at most one such vertex $u$.
If $deg_{G'}(v)=2$,
we insert~$v$ into~$D$.
We proceed with $w$ analogously. We so guarantee that each good closed chain
of $G'$ has a vertex in $D$.
As shown by Lemma~\ref{lem:artifEdges},
if the number of artificial edges and shortcuts 
that are simultaneously in use exceeds $2q$,
$G$ is not outerplanar and we stop. %
\begin{figure}[t]
 \centering%
 \includegraphics[scale=0.70]{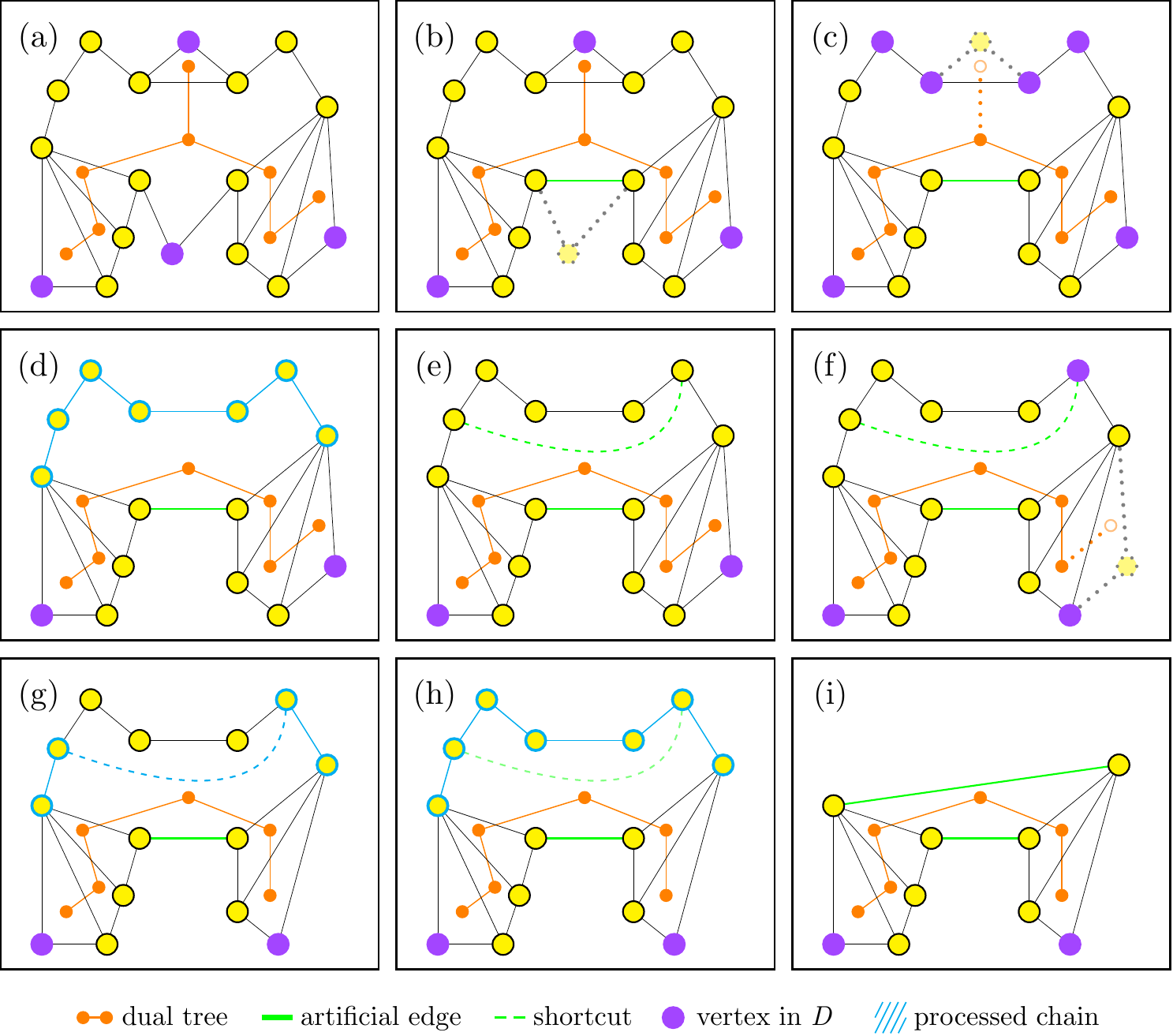}%
 \caption{%
  Snapshots of an execution of Phase 2 are shown.
  The graph after Phase 1 is depicted in (a).
  In (b) the chain
  that contains the bottom middle vertex
  is replaced by an artificial edge.
  In (c) the chain
  that contains the top middle vertex is removed,
  which results in the insertion of four
  new vertices into the choice dictionary.
  In (d) %
  a chain is checked if it is good.
  Since the check failed, %
  a shortcut that connects the vertices of degree~$2$
  near both ends of the chain is created.
  After %
  the removal of the chain
  that contains the right bottom vertex in (f), two vertices
  are inserted
  into the choice dictionary.
  In (g), the upper vertex of the %
  two newly inserted vertices
  is extracted and its chain is determined.
  To check %
  whether the chain (marked in (h)) is good or not, we can use
  the shortcut. 
  Because the chain is good
  it is replaced by an artificial edge in (i).
 }%
 \label{fig:op:p2}%
\end{figure}

\subsection{Correctness of the Algorithm}

Since
the removal of chains in both phases 
is performed in accordance
with Lemma~\ref{lem:bop:prop}
and Lemma~\ref{lem:bop:proof}, to show the correctness of the algorithm,
it remains to verify
that the checks on the counters~$P_{e}$
are correct and sufficient.

\begin{lemma}\label{lem:PeCounter}
 The counter~$P_{e}$
 counts for any edge~$e = \{v_{1},v_{2}\}$
 that belongs at some moment
 during our algorithm
 to the current graph~$G'$
 the number of internal vertex-disjoint paths
 with at least $2$ edges
 between~$v_{1}$ and~$v_{2}$
 that have been removed from~$G$
 so far.
\end{lemma}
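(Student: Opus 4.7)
The plan is to prove the lemma by induction on the number of chains the algorithm has processed so far, uniformly across both phases, while maintaining the stronger invariant that, for every edge $e=\{v_1,v_2\}$ currently in $G'$, the counter $P_e$ equals the number of chains with endpoints $\{v_1,v_2\}$ that have already been processed and removed. The lemma itself will then follow from the additional observation that these chains, when expanded back to paths in $G$, form a collection of pairwise internally vertex-disjoint paths of length at least~$2$ between $v_1$ and $v_2$.

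First I would handle the base case: before any chain is processed, every $P_e$ is zero and no path has been removed, so the invariant holds vacuously. For the inductive step I would case-split on the type of chain~$C$ with endpoints $v_1,v_2$ being processed. If $C$ is a good closed chain (in either Phase~$1$ or Phase~$2$), then $\{v_1,v_2\}$ is already an edge of $G'$, and $P_{\{v_1,v_2\}}$ is incremented by exactly one. If $C$ is good but not closed (only in Phase~$2$), a new artificial edge $e_a=\{v_1,v_2\}$ is introduced and its counter $P_{e_a}$ is initialized to $1$, again accounting for $C$. A non-good chain in Phase~$2$ merely triggers the insertion of a shortcut---no vertices or edges are removed and no counter is touched, so the invariant is trivially preserved.

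The main obstacle is translating the bookkeeping invariant (``number of processed chains'') into the statement of the lemma (``number of internally vertex-disjoint paths removed from $G$''). For this I would associate with every processed chain~$C$ with endpoints $v_1,v_2$ the path $\pi_C$ in $G$ obtained by recursively expanding every artificial edge of $C$ into the chain it had previously replaced. By construction $\pi_C$ has at least two edges, since $C$ does. The key claim is that, over all processed chains $C_1,C_2,\ldots$ having the same pair of endpoints $\{v_1,v_2\}$, the paths $\pi_{C_1},\pi_{C_2},\ldots$ are pairwise internally vertex-disjoint in $G$. This is supported by the crucial fact that whenever a chain is good-processed, all of its current interior vertices are physically deleted from $G'$, and the same holds for the interior vertices of every chain that is recursively hidden inside any of its artificial edges. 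Consequently, once a vertex has appeared as an interior vertex in the expansion of some processed chain, it is absent from $G'$ thereafter and cannot reappear in the expansion of any later processed chain.

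Putting these ingredients together yields the lemma: after each processed chain the counter of its endpoint-edge grows by exactly one, and the corresponding expansions in $G$ form a pairwise internally vertex-disjoint family of paths of length at least~$2$, so at every moment when $e$ is in $G'$ the counter $P_e$ counts exactly the number of such removed paths between its endpoints, as claimed.
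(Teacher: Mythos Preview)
Your induction scheme tracks only one of the two kinds of increments the algorithm performs. When a chain $C$ with endpoints $u,v$ is removed, the algorithm does not merely increment (or initialize) the counter of the edge $\{u,v\}$: it also increments $P_e$ for \emph{every} edge $e$ lying on $C$. Your proposed invariant, ``$P_e$ equals the number of chains with endpoints $\{v_1,v_2\}$ that have already been processed and removed,'' ignores these increments entirely. Consequently the invariant is violated at the very moment the edges of $C$ have their counters raised by one and are about to be deleted, and the lemma---which explicitly speaks of any edge that belongs to $G'$ \emph{at some moment}---is not covered for those edges. The paper's proof is largely devoted to precisely this case: for an edge $e$ on $C$ with $C=C_{e,1},e,C_{e,2}$, it exhibits the path $C_{e,1}^{-1},P,C_{e,2}^{-1}$ (with $P$ a $u$--$v$ path avoiding $e$) as the internal vertex-disjoint path between the endpoints of $e$ that is destroyed by removing $C$.

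There is a second, related gap. Even restricting to the closing edge $\{u,v\}$, you show only that the expanded chains $\pi_{C_1},\pi_{C_2},\ldots$ are pairwise internally vertex-disjoint; that gives a lower bound on the number of such paths destroyed, not the exact count the lemma asserts. The paper handles the matching upper bound by arguing that for every \emph{other} edge of $G'$ the removal of $C$ leaves the number of internally vertex-disjoint long paths unchanged (any such path through an inner vertex of $C$ must pass through both $u$ and $v$, and a $u$--$v$ connection survives via the closing or artificial edge). Without this step you have not established that $P_e$ equals, rather than merely is bounded above by, the number of removed paths.
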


\begin{proof}
 Let~$C$ denote a chain
 with endpoints~$u$ and~$v$
 that is to be removed from the graph.
 We now prove
 that this operation
 results in the number
 of internal vertex-disjoint paths
 with at least $2$ edges
 between the endpoints~$v_{1}$
 and~$v_{2}$
 of an edge~$e = \{v_{1}, v_{2}\}$
 to be lowered by~$1$
 for every edge on~$C$
 as well as the edge
 that connects the endpoints of~$C$
 and remains unchanged
 for every other edge.

 It is easy to see
 that the removal of~$C$
 results in the number
 of internal vertex-disjoint paths
 between~$u$ and~$v$
 to be lowered by~$1$.
 Let us now focus on an edge~$e$ on~$C$ and assume that 
 $C$ is split into three parts
 such that~$C = C_{e,1}, e, C_{e,2}$.
 If the original graph
 was biconnected outerplanar,
 there 
 is a path~$P$ between~$u$ and~$v$ that does not use $e$.
 Hence,
 for every edge~$e$ on~$C$
 the path~$C_{e,1}^{-1}, P, C_{e,2}^{-1}$
 connected the endpoints of~$e$
 and has been removed
 by the removal of~$C$ where $C_{e,1}^{-1}$ and $C_{e,2}^{-1}$ denotes
 the paths reverse to $C_{e,1}$ and $C_{e,2}$, respectively.

 For every other edge
 the number of internal vertex-disjoint paths
 remains unchanged
 since each other internal vertex-disjoint path
 that contains a vertex
 that is part of~$C$
 has to contain~$u$ as well as~$v$
 and a path containing~$u$ and~$v$
 remains due to
 the (possibly artificial) edge~$\{u,v\}$.
 Inserting an artificial edge~$\{u,v\}$
 does not increase the number of paths
 for any pair of vertices in~$G'$
 since a path containing~$u$ and~$v$
 existed with the chain $C$ before.
\end{proof}

To obtain the number of internal vertex-disjoint paths
of the original input graph
our algorithm
increments the counter~$P_{e}$
for every edge of the chain~$C$
as well as the edge connecting its endpoints
when~$C$ is the chain
that is currently being removed.
A counter~$P_{e}$
is never decreased.
Since we remove only chains
with at least $2$ edges
whose inner vertices have degree~$2$,
when an edge~$e$
is eventually removed from~$G'$,
so is the last remaining
path with at least $2$ edges
between its endpoints.
As a result there is no need
to store the counters for chains
once they have been removed.
When the removal
of the last remaining chain
results in~$G'$
consisting of a single edge~$\{u,v\}$,
there remains not a single
internal vertex-disjoint path
with at least $2$ edges.
Finally
we check the counter~$P_{\{u,v\}}$
for the last remaining edge~$\{u,v\}$
to make sure
that we check the counter
of every edge
that had been part of the graph~$G'$
at one time
during the execution of our algorithm.

By our counters $P_e$ for all original and artificial edges, 
we count the number
of internal vertex-disjoint paths
between vertices
that have been endpoints of a removed chain.
Thus, our tests are
sufficient by Lemma~\ref{lem:bop:proof}.
By Lemma~\ref{lem:bop:prop}~(ii)
the counting of
the internal vertex-disjoint paths
between the endpoints of other edges
and terminating
if one of these counters
exceeds~$2$
is correct.

\subsection{Space-Efficient Implementation and Space Bounds}
Before any allocation of space is performed
the first check of the algorithm
is to verify
that the number~$m$ of edges
within the input graph~$G$
is at most~$2n-3$,
where~$n$ denotes
the number of vertices in~$G$.
From now on, we assume that~$m = O(n)$.
During Phase~$1$
we use a bit vector of~$n$ bits
that allows us
to mark vertices as \hbox{\textit{tried}}.

The current graph~$G'$
is represented as follows.
We use a choice dictionary
of~O(n) bits,
where $i$ is in the set represented by the choice dictionary
exactly if the vertex~$i$
is still part of~$G'$.
In addition, for each vertex $v$ of initial degree~$deg_{G}(v)$, 
we store $\Theta(deg_{G}(v))$ bits.
Within 
this space
we use a choice dictionary~$C_{v}$
with universe $\{1, \ldots, deg_{G}(v)\}$ as already described
above 
and a counter
that maintains the current degree of~$v$
in $G'$.
It follows that,
ignoring 
artificial edges
and shortcuts,
a representation of the current graph~$G'$
with one choice dictionary for each vertex fits in $O(n)$ bits of working space.

We next want to bound
the number of artificial edges and shortcuts.
We start to bound the number
of chains that induce a face
after Phase~$1$.
For this purpose,
let us define the \emph{dual tree}~$T_{G}$
of a biconnected and outerplanar graph~$G$
as the dual graph of~$G$
minus the vertex
that represents the outer face of~$G$.
Since~$G$ is biconnected and outerplanar,
$T_{G}$ is a tree.
The leaves of~$T_{G}$
correspond to those faces of~$G$
that are induced by chains.
Thus, the removal of a closed chain~$C$
that induces a face~$F$ in~$G'$,
which results in the merging
of~$F$ with the outer face,
corresponds to the removal
of the leaf~$F$ in~$T_{G'}$.
\begin{lemma}
 \label{lem:dual:leafbound}
 If the input graph~$G$
 with~$n$ vertices
 is biconnected outerplanar,
 then the number of chains
 that induce a face
 in the current graph~$G'$
 after the~$t$th round
 of Phase~$1$
 is at most~$n/2^{t}$.
\end{lemma}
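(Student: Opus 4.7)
The plan is to prove this by induction on $t$, exploiting the dual tree structure of $G'$. Let $T_t$ denote the dual graph of $G'$ after the $t$th round of Phase~1 with the outer-face vertex removed; by Lemma~\ref{lem:bop:prop}, $G'$ stays biconnected outerplanar throughout Phase~1, so $T_t$ is well-defined and is a tree whose leaves correspond bijectively to chains inducing a face. Write $L_t$ for the number of leaves of $T_t$.

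The base case is $L_0 \le n-2$ by Euler's formula (biconnected outerplanar graphs on $n$ vertices have at most $2n-3$ edges and hence at most $n-2$ inner faces), so $L_0 \le n/2^0$.

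For the inductive step I would prove $L_{t+1} \le L_t/2$, which combined with $L_0 \le n$ yields $L_t \le n/2^t$ by induction. The central observation is that every closed chain has at least one inner degree-$2$ vertex (since it bounds a face of a biconnected outerplanar graph), which is placed in the choice dictionary~$D$ at the start of round $t+1$. Hence every leaf of $T_t$ is attempted by some stage; a leaf fails to be removed only when, at the moment its chain is considered, the chain contains a vertex marked \emph{tried}. Tried marks arise only from previously processed chains in the same round whose endpoint degree dropped to~$2$ upon processing. Using the planar face structure, such a drop forces the endpoint to lie on exactly two inner faces before processing, which in turn forces the blocked leaf and the processed (blocking) leaf to be adjacent in $T_t$ at a common parent face. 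From this structural restriction I would construct an injective pairing from the unprocessed leaves of $T_t$ to the processed ones, bounding the unprocessed leaves by at most $L_t/2$.

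Finally, to bound $L_{t+1}$ (the leaves of the \emph{new} dual tree), I must also account for internal nodes of $T_t$ that become leaves in $T_{t+1}$ by having all but one of their $T_t$-neighbors removed in round $t+1$. Each such new leaf can be injectively charged to a processed leaf whose removal triggered its creation. Combining the two injections would give $L_{t+1} \le L_t/2$.

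The main obstacle is the detailed planar-embedding case analysis behind these two injective chargings. The cascading behavior of a single stage---which may process several chains in sequence before the tried-marking halts it---interacts subtly with the dual tree structure, and verifying that neither the blocked-to-processed pairing nor the new-leaf-to-processed-leaf charging ever overloads a single processed leaf requires careful bookkeeping of how endpoints' degrees drop and how the tried marks propagate across stages within a round.
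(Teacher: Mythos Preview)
Your overall framework---dual tree, induction on~$t$, and proving $L_{t+1}\le L_t/2$---matches the paper's. But the detailed mechanism you propose rests on a misunderstanding of where the \emph{tried} marks land, and this makes your plan more complicated than necessary.

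Your first injection (``blocked leaves $\to$ processed leaves'') is vacuous: every leaf of $T_t$ \emph{is} removed during round $t+1$. A leaf~$l$ has a closed chain~$C_l$ whose inner vertices are all in~$D$; those vertices cannot acquire a \emph{tried} mark before~$l$ is processed, because tried marks are placed only on the degree-$2$ vertices of a chain that the stage reaches \emph{after} a successful removal and then finds \emph{not} closed---i.e., a chain bounding a face that is still a non-leaf in the current dual tree. Such a chain lies on the boundary of some internal face~$F$, and its degree-$2$ vertices are disjoint from those of any leaf's chain (a degree-$2$ graph vertex is incident to exactly one inner face). So no leaf of $T_t$ is ever blocked.

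Once you see this, the halving is a one-line counting argument rather than a delicate charging scheme. The paper observes that a stage cascades up the dual tree, removing leaves until it hits a node~$F$ that is not a leaf in the current tree; since one neighbour of~$F$ was just removed and $F$ still has degree $\ge 2$, $F$ had degree $\ge 3$ in $T_t$. Consequently every leaf~$v$ of $T_{t+1}$ had $\deg_{T_t}(v)\ge 3$, so at least two of the $T_t$-subtrees hanging off~$v$ were entirely removed. Each such removed subtree contains at least one leaf of $T_t$, and the removed subtrees attached to distinct leaves of $T_{t+1}$ are pairwise disjoint (they are components of $T_t\setminus T_{t+1}$). Hence $2L_{t+1}\le L_t$. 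No case analysis of endpoint degrees or propagation of tried marks is needed.
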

\begin{proof}
 It is easy to see
 that the initial number of chains
 that induce a face
 before the first round
 is  
at most~$n$.
 Recall that, in each stage
 of Phase~$1$,
 we consider a vertex~$u$ of degree~$2$
 and test whether it is part
 of a closed chain~$C$
 within our current graph~$G'$.
 If so,
 we remove the chain~$C$ from~$G'$
 and thereafter continue recursively
 on one endpoint~$v$ of~$C$
 if the removal of~$C$
 results in~$v$
 becoming a vertex of degree~$2$.
 Let us analyze the modifications of the algorithm 
 in the dual tree. 
 Whenever we remove a chain,
 this means that we remove 
 a leaf
 and then recursively try 
 if its parent thereby has become a leaf
 and can be removed as well
 until a node of degree at least $2$
 is encountered.

 Vertices of a chain~$C$
 that is incident to a face~$F$
 are marked as \hbox{\textit{tried}}
 if~$F$ could not be merged
 with the outer face
 after a face
 that was incident to~$F$
 has been successfully
 merged with the outer face
 by the removal of the chain
 that induced it.
 The removal of~$F$ fails only
 if~$F$ is not induced by~$C$
 and there remain at least two faces
 in~$T_{G'}$
 that are incident on~$F$.
 We conclude
 that~$F$ had degree at least~$3$
 at the beginning of the current round.
 Since in every round
 leaves are removed
 until a node is encountered
 that had at the beginning of the round
 a degree of at least~$3$,
 the number of leaves
 is at least halved in every round.
\end{proof}

\begin{corollary}
\label{cor:dual:leafbound}
 If the input graph~$G$
 with~$n$ vertices
 is biconnected outerplanar,
 the number of closed chains after Phase~$1$
 in the current graph~$G'$
 is~$O(n/\log n)$.
\end{corollary}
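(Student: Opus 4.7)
The plan is to apply Lemma~\ref{lem:dual:leafbound} directly with $t$ equal to the number of rounds executed in Phase~$1$. Recall that Phase~$1$ is specified to consist of $\Theta(\log\log n)$ rounds, and by choosing the implicit constant appropriately we may assume that at least $\lceil \log\log n\rceil$ rounds are performed.

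Concretely, I would argue as follows. By Lemma~\ref{lem:dual:leafbound}, after the $t$th round the number of chains inducing a face in $G'$ is at most $n/2^{t}$. Setting $t \ge \log\log n$ and using the identity $2^{\log\log n} = \log n$, this bound becomes
\[
\frac{n}{2^{t}} \;\le\; \frac{n}{2^{\log\log n}} \;=\; \frac{n}{\log n}.
\]
Since the closed chains referred to in the corollary are precisely those that induce a face (by the definition of \emph{closed chain} given before Lemma~\ref{lem:bop:good}), this establishes the desired bound of $O(n/\log n)$ closed chains remaining in $G'$ after Phase~$1$.

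No real obstacle arises here, as the statement is a direct quantitative consequence of the preceding lemma together with the choice of $\Theta(\log\log n)$ as the number of rounds; the only point to be careful about is that Lemma~\ref{lem:dual:leafbound} assumes $G$ is biconnected outerplanar, which is exactly the hypothesis of the corollary. Therefore the proof amounts to invoking Lemma~\ref{lem:dual:leafbound} with $t=\lceil\log\log n\rceil$ and simplifying the resulting expression, which motivates a posteriori why the round count $\Theta(\log\log n)$ was selected in the description of Phase~$1$.
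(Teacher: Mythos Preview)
Your argument is correct and is exactly the intended one: the paper states the corollary without proof because it follows immediately from Lemma~\ref{lem:dual:leafbound} by plugging in $t=\Theta(\log\log n)$, and you have spelled this out precisely, including the identification of closed chains with face-inducing chains.
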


We next bound the number of artificial edges and shortcuts.

\begin{lemma}\label{lem:artifEdges}
The initial number $q$ of chains at the beginning of Round 2
that have vertices in $D$
only doubles during Phase 2, and
the number of artificial edges and shortcuts that are simultaneously in
 use by the algorithm is always at most $2q=O(n/\log n)$.
\end{lemma}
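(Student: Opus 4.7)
My plan is to combine a charging argument with the leaf-bound from Corollary~\ref{cor:dual:leafbound}. The final equality $2q=O(n/\log n)$ is immediate from that corollary, so the core of the lemma is to establish the two upper bounds by $2q$: (a) the number of chains of $G'$ having a vertex in $D$ never exceeds $2q$, and (b) the number of artificial edges plus shortcuts simultaneously in use never exceeds $2q$.

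For (a), I would assign one \emph{token} to each chain of $G'$ the first time one of its vertices is placed in $D$. Initially exactly $q$ tokens are issued, one per good closed chain whose witness enters $D$ at the start of Phase~$2$. A new token can be issued only when the subroutine triggered by a degree drop at an endpoint $v$ of a just-processed chain places a vertex in $D$. Such a degree drop happens only when a \emph{good} chain (closed or not closed) is processed, since processing a not-good chain touches no degree, and then only at the (at most two) endpoints of the processed chain. At each endpoint the subroutine inserts either $v$ itself (if $\mathrm{deg}_{G'}(v)$ drops to~$2$) or the at most one degree-$2$ Hamiltonian-neighbour of $v$ lying on a single other chain, using that $G'$ is biconnected outerplanar (Lemma~\ref{lem:bop:prop}(i)); so each good-chain processing issues at most two new tokens. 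To close the count, I would bound the total number of good-chain processings by $q$ via a potential argument on the dual tree $T_{G'}$: every good closed chain processed removes a leaf of $T_{G'}$, and every good non-closed chain processed contracts a degree-$2$ node of $T_{G'}$, while $T_{G'}$ starts Phase~$2$ with at most $q$ leaves. Hence the total number of tokens ever issued is at most $q+2\cdot(q/2)=2q$.

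For (b), I would assign each artifact currently in use to a unique chain of $G'$: an artificial edge is itself the sole edge of a chain of $G'$; a shortcut is attached to the non-good chain on which it was installed, and obsolete shortcuts are cleaned up when that chain is next processed from $D$, so no chain carries more than one shortcut at a time. Every chain carrying an artifact must have had a vertex placed in $D$ at the moment the artifact was created; hence the number of artifact-bearing chains is bounded by the number of chains ever touched in $D$, which part~(a) bounds by $2q$. The hardest step is the leaf/contraction accounting on $T_{G'}$ that bounds the number of good-chain processings by $q$: the dual tree must be maintained across processings of not-good chains (which alter $G'$ via shortcut insertion but leave $T_{G'}$ unchanged) and across endpoint cascades where a degree drop merges two faces, and one must verify that such cascades neither create new leaves in $T_{G'}$ nor re-issue tokens to already-tokened chains, which in turn rests on the structural fact (Lemma~\ref{lem:bop:prop}(i)) that $G'$ remains biconnected outerplanar after each modification.
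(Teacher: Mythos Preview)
Your dual-tree potential argument has a genuine gap. You claim the total number of good-chain processings is at most $q$ because $T_{G'}$ starts Phase~2 with at most $q$ leaves, but this inference fails: removing a leaf may turn its neighbour into a new leaf, so the number of leaf removals needed to consume a tree equals its number of \emph{nodes}, not its number of initial leaves. After Phase~1 the dual tree has $O(n/\log n)$ leaves by Corollary~\ref{cor:dual:leafbound}, yet its internal nodes---one per inner face of $G'$---can still number $\Theta(n)$, since Phase~1 only prunes leaves. Hence your token bound $q+2\cdot(q/2)=2q$ (whose arithmetic is already unclear: you announce a bound of $q$ on processings but then use $q/2$) is not established. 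Two further inaccuracies compound this: processing a good \emph{non}-closed chain does not lower any endpoint's degree, because the inserted artificial edge exactly compensates for the removed chain edge, so only Case~1 ever triggers the insertion subroutine; and an artificial edge is never ``the sole edge of a chain of $G'$'' at the moment it is created, since both its endpoints retain degree at least~$3$.

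The paper bypasses the dual tree and argues directly on the Hamiltonian cycle of $G'$. The $q$ initial good closed chains are $q$ disjoint arcs of this cycle, viewed as $q$ fires. Since only Case~1 inserts new vertices into $D$, and Case~1 deletes its arc while adding at most one vertex from each of the two neighbouring arcs, every fire maintains at most a left and a right active front at any moment. Declaring a chain still ``on fire'' when it carries an artificial edge (Case~2) or a shortcut (Case~3) gives at most $2q$ burning chains \emph{simultaneously}, which is precisely what the lemma asserts; no bound on the total number of chains ever entered into $D$ is needed or claimed.
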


\begin{proof}
Before Phase 2, the algorithm uses neither artificial edges nor
shortcuts.\\%
By the corollary above, 
the number of vertices in the choice dictionary $D$ (each represents a closed
chain) is at most $q=O(n/ \log n)$ at the beginning of Phase 2. One may
consider these chains cutting the Hamilton cycle of $G'$ into $q$ parts. 
Since we add  new vertices into $D$ at the end of Phase 2 only if we were in Case
1 before, a
new chain $C'$ is
always neighbored to an old chain $C$, which is then deleted from $G'$ before
a vertex of $C'$ is
added into $D$. Roughly speaking, the chains at the beginning of Phase 2 are
fires that can spread to new chains on the left and on the right along the Hamilton
cycle, but the fire can never return and ends immediately at the moment when
a new chain starts to burn. It is easy to see that at most $2q$ chains are on
fire.
If we define that a chain that is processed with Case 2 or 3 is still on
fire, then we have to store artificial edges or shortcuts only for chains
that are on fire, and thus, their number is bounded by $2q=O(n/\log n)$.
\end{proof}

We can use a bit vector of $O(n)$ bits
to store for every vertex~$v$
the information
whether an artificial edge or shortcut
exists at~$v$ or not.
As a consequence of the last lemma,
we can store
all artificial edges
as well as all shortcuts in a ragged dictionary and
the total space bound of the algorithm
is~$O(n)$ bits.

\subsection{Time bounds}

It remains to determine the 
running time of our algorithm for recognizing
a biconnected outerplanar graph.
We start with two auxiliary lemmas that
help us to analyze the running time.

\begin{lemma}
 \label{lem:bop:nothree2}
 Let~$u$ denote a vertex of degree~$>2$
 within a biconnected outerplanar graph~$G$.
 Then~$u$ is adjacent to at most
 two vertices of degree~$2$.
\end{lemma}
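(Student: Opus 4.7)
The plan is to exploit the well-known structural fact recalled at the beginning of Section~4: in any biconnected outerplanar graph with at least three vertices, the set of outer edges forms a unique Hamiltonian cycle $H$ that passes through every vertex. Since $u$ has degree larger than $2$, the graph must have at least four vertices (otherwise the only biconnected possibility is a triangle, in which every vertex has degree~$2$), so $H$ is available.

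The key observation to make is the following: if $v$ is any vertex of degree~$2$ in $G$, then both edges incident to $v$ must lie on $H$, because $H$ visits $v$ exactly once and therefore uses two of its edges, and $v$ has only two edges in total. Consequently, if $v$ is a degree-$2$ vertex adjacent to $u$, then the edge $\{u,v\}$ is an edge of the Hamiltonian cycle, i.e., $v$ is one of the two neighbors of $u$ on $H$.

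From here the bound is immediate: $u$ has exactly two neighbors on $H$, so at most two degree-$2$ vertices of $G$ can be adjacent to $u$. I would present this as two or three short sentences after stating the degenerate case ($n \le 3$) and invoking the uniqueness of the Hamiltonian cycle.

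The main (and only) obstacle is really a matter of citation rather than argument: to keep the proof self-contained I need to rely cleanly on the Hamiltonian-cycle property stated at the start of Section~4, and to handle the trivial small-$n$ case where that statement does not apply. Once those are in place, the proof is one paragraph and requires no further machinery from earlier in the paper.
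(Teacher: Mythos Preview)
Your proof is correct and is genuinely different from the paper's. The paper argues by contradiction: assuming three degree-$2$ neighbors $x,y,z$ of $u$, it uses biconnectivity to find paths between each pair of $x,y,z$ avoiding $u$ (and, since $x,y,z$ have degree~$2$, avoiding the third vertex as well), observes that these paths share vertices, and contracts everything outside $\{u,x,y,z\}$ to exhibit a $K_{2,3}$ minor, contradicting outerplanarity. Your argument instead goes straight through the Hamiltonian-cycle fact already stated in Section~4: any degree-$2$ vertex has both its edges on the outer Hamiltonian cycle $H$, so a degree-$2$ neighbor of $u$ must be one of $u$'s two $H$-neighbors. Your route is shorter and more structural, and it reuses a property the paper has already asserted; the paper's minor-based argument is more self-contained in that it does not appeal to the Hamiltonian-cycle fact, but it is also more laborious. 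Either way the lemma follows in a few lines.
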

\begin{proof}
 Aiming at a contradiction, let us assume that~$G$ is a biconnected outerplanar graph
 that contains a vertex~$u$
 that is adjacent
 to at least three vertices $x,y,$ and $z$
 of degree~$2$.
 Since~$G$ is biconnected,
 $G$ has to contain
 a path $P_{a,b}$ between each pair $a,b\in \{x,y,z\}$ with $a\neq b$
 such that $P_{a,b}$ does not contain~$u$, and since $x,y,z$ all have degree 2,
 $P_{a,b}$ also does not contain the remaining vertex in $\{x,y,z\}\setminus \{a,b\}$.
 Moreover, $P_{x,y}$ and $P_{y,z}$ have a common vertex, which is the
 neighbor of $y$ that is not $u$.
 Thus, merging all vertices except $\{u,x,y,z\}$ to one vertex $v$, we
 obtain the $K_{2,3}$
 as minor of~$G$,
 which is a contradiction
 to~$G$ being outerplanar.
\end{proof}

\begin{lemma}
\label{lem:timePerRound}
The total time
that is required
to traverse
the adjacency arrays of vertices
is~$O(n)$ for each round in Phase 1.
\end{lemma}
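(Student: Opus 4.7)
\emph{Proof plan.} The adjacency-array traversals within a round of Phase~1 come from two sources: (a)~walking a chain from a freshly extracted vertex $u \in \hbox{\textit{D}}$ in order to reach its two endpoints, and (b)~testing whether those endpoints are adjacent in $G'$ in order to verify the closedness condition. I would bound each contribution by $O(n)$ separately.

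For (a), stepping from a degree-2 vertex to its unique other remaining neighbour takes $O(1)$ time using the per-vertex choice dictionary $C_v$ together with the maintained degree counter, so a single chain walk of length $\ell$ costs $O(\ell)$. I would then show that every degree-2 vertex participates in at most $O(1)$ walks per round: extraction from $\hbox{\textit{D}}$ already removes the starting vertex, and at the end of every stage the algorithm removes all processed degree-2 vertices from $\hbox{\textit{D}}$, so an already-walked chain can never initiate another walk in the same round. Chain extensions inside one stage merely replace the current chain by a strictly larger chain whose inner vertices are also removed from $\hbox{\textit{D}}$ at the end of the stage. Since the degree-2 vertices of $G'$ partition into chains at every moment, summing chain lengths over all walks yields a total walking time of $O(\sum|C|) = O(n)$.

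For (b), a single closedness test costs $O(\min\{\deg_{G'}(v),\deg_{G'}(w)\})$ by iterating the smaller of the two endpoint choice dictionaries. The structural input I would use here is Lemma~\ref{lem:bop:nothree2}: every vertex of $G'$ has at most two degree-2 neighbours and hence is the endpoint of at most two chains simultaneously. Combining this with the preliminary sparsity check $m \le 2n-3$, at any snapshot of $G'$ we get $\sum_{\text{current chains}} \min\{\deg(v),\deg(w)\} \le 2\sum_v \deg_{G'}(v) = O(n)$. I would then argue that each chain is tested at most once per round: a successful test deletes the chain; a failed test still removes every degree-2 vertex of the chain from $\hbox{\textit{D}}$ at stage end; and chain extensions produce genuinely new chains rather than retests of the same chain. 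Thus the snapshot estimate promotes to a round-wide bound of $O(n)$ for the total closedness-testing cost. Adding (a) and (b) gives the claimed $O(n)$ per round.

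The main obstacle is the aggregation in part~(b): while (a) is a fairly direct amortization over a partition of degree-2 vertices, the closedness-test cost depends on endpoint degrees that change as chains are removed, so one cannot simply quote a single snapshot estimate. The argument rests on two ingredients---the once-per-round testing bound, which comes from the stage-end cleanup of $\hbox{\textit{D}}$, and Lemma~\ref{lem:bop:nothree2}, which caps the simultaneous chain incidence at each vertex by two---and together they allow the snapshot sum $\sum \min\{\deg(v),\deg(w)\} = O(n)$ to be transferred to a bound that holds across the entire round.
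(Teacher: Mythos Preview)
Your part~(a) is fine and matches the paper's observation that each degree-$2$ vertex is considered at most twice per round. The gap is in part~(b), specifically the step ``the snapshot estimate promotes to a round-wide bound.''

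The two ingredients you name---each chain is tested at most once, and Lemma~\ref{lem:bop:nothree2} caps simultaneous chain incidence at two---do \emph{not} combine to give the round-wide bound. The reason is that the set of chains is not static: when a closed chain $C$ with endpoints $u,v$ is removed and $u$'s degree later drops to~$2$ (via removals elsewhere), a genuinely new chain $C'$ forms with $v$ still as an endpoint. So a single high-degree vertex $v$ can be the minimum-degree endpoint of many \emph{different} chains $C_1,C_2,\ldots$ over the course of one round, each triggering a traversal of $v$'s adjacency array at cost $\deg_{G'}(v)$. Your snapshot inequality bounds the cost at any fixed instant, but summing $\min\{\deg(v),\deg(w)\}$ over all chains that are ever tested is a sum over a changing family, not over one snapshot, and nothing you have written controls it.

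The paper closes this gap with a coin argument for vertices of degree~$\ge 5$: give each such vertex $v$ a purse of $2\deg^*(v)$ coins at the start of the round (where $\deg^*$ is the degree at round start). Two traversals of $v$ are free. Any additional traversal of $v$ arises only because a former other-endpoint $u$ (which had $\deg(u)\ge\deg(v)$ when $v$ was traversed) has since dropped to degree~$2$; at that moment $u$ hands half its coins---at least $\deg^*(u)\ge\deg(v)$ of them---to $v$, paying for the extra traversal. Since total coins are $O(\sum_v\deg^*(v))=O(n)$, the traversal cost is $O(n)$. Your proposal is missing precisely this amortization, and without it the aggregation in~(b) does not go through.
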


\begin{proof}
We start to prove
that the total time
that is required
to traverse
the adjacency arrays of vertices
with degree at most $4$
is~$O(n)$.
Since we use a choice dictionary
to bound the time to traverse
the adjacency array of each vertex
to its degree in the current graph~$G'$,
traversing the adjacency array of
a vertex~$v$ of degree at most~$4$
takes constant time.
The adjacency array
of a vertex~$v$ is only traversed
if it is a vertex of degree~$2$
that is tested to be part of a closed chain
or it is adjacent to one
and thus the endpoint
of a (possibly closed) good chain.
Since the number
of vertices of degree~$2$
is bounded by~$n$
and since 
 each vertex of degree $2$ (vertices 
of initial degree larger than $2$ are taken into 
consideration only 
after their degree has dropped to $2$)
is considered at most twice
in every round
to be part of a closed chain,
the total time to traverse
the adjacency arrays of vertices
with degree at most $4$
is~$O(n)$.

It remains to prove
that the time
to traverse the adjacency arrays
of vertices with degree at least $5$
is~$O(n)$ as well.
We allow the adjacency array of every vertex~$v$
of degree at least 5
to be traversed two times
without charge.
Every additional traversal
of the adjacency array of~$v$
has to be paid with~${{\mathrm{deg}}^*}(v)$ coins
that where obtained by a neighbor, but not by own coins.
For this purpose, we give 
every vertex~$v$
of degree at least $5$
at the beginning of each round 
$2 {{\mathrm{deg}}^*}(v)$ coins, 
where ${{\mathrm{deg}}^*}(v)$ denotes the degree of~$v$
at the beginning of the current round.
Since the degree of a vertex is never increased
during the computation of our algorithm,
$deg(v) \le {{\mathrm{deg}}^*}(v)$ holds during the whole round.
At the moment where
the degree of~$v$
is lowered to~$2$,
it gives half of its coins to 
the two vertices of degree at most 3 that are closest before and after
$v$
on the Hamiltonian cycle.

We now consider
the number of traversals
of the adjacency array
of a vertex~$v$ of degree at least 5.
Lemma~\ref{lem:bop:nothree2} states
that each vertex
in a biconnected outerplanar graph
is adjacent to at most two vertices
of degree~$2$.
Let~$C_{1}$ denote a chain
with endpoints~$u$ and~$v$
and~$C_{2}$ denote a chain
with endpoints~$v$ and~$w$.
We only discuss $C_{1}$
since the analysis
with~$C_{2}$ is analogous.
The adjacency array of~$v$
can be traversed once,
when a vertex of degree~$2$
that is located on~$C_{1}$
is drawn from the choice dictionary~$D$
and $deg(v) \le \deg(u)$ holds.
As soon as the degree
of an endpoint of~$C_{1}$ drops to~$4$,
a vertex of degree~$2$ from~$C_{1}$
is reinserted into~$\hbox{\textit{D}}$.
There are two possible cases in which the adjacency array of a vertex $v$
is traversed again
while its degree is bigger than~$2$.
The first case is
that additional removals of chains
result in the degree of~$u$
to be lowered to~$2$
and~$C_{1}$ is now part of a bigger chain containing $u$ and
with $v$ and a vertex~$u'$ as endpoints.
In that case
$u$ gives half of its $2deg^*(u)$ coins to~$v$, i.e., $u$ gives
$deg^*(u)\ge deg(u)\ge deg(v)$ coins to~$v$,
which allows~$v$
to pay for one additional traversal
of its adjacency array
and~$v$ and~$u'$
to keep their own purse of coins
for the time
when their own degree drops to~$2$.
In the second case,
the degree of~$u$ has dropped below $5$, but
is still bigger than~$2$.
Since 
the adjacency array of~$v$
is only traversed
if~$deg(v) \le \deg(u)$,
the analysis for vertices
of degree at most $4$
applies to all future traversals
of the adjacency array of~$v$.

To sum up,
the time
that is required
to traverse the adjacency arrays
of vertices of degree at least $5$
is~$O(4 \sum_{v=1}^{n} {{\mathrm{deg}}^*}(v)) = O(m) = O(n)$
for every round.
\end{proof}

\begin{lemma}\label{lem:biOuter}
 There is an algorithm that, 
 given an $n$-vertex biconnected graph~$G$
 in an adjacency array representation
 with cross pointers, runs in $O(n\log\log n)$ time and uses $O(n)$ bits
 of working space, and determines
 whether~$G$ is biconnected outerplanar.
\end{lemma}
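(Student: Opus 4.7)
The plan is to verify in turn the three assertions of Lemma~\ref{lem:biOuter}---correctness, space usage of $O(n)$ bits, and running time of $O(n\log\log n)$---by collecting the results already established in this section and adding a Phase~$2$ running-time analysis analogous to Lemma~\ref{lem:timePerRound}.

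For correctness, I would first observe that every removal or replacement of a chain is in accordance with Lemma~\ref{lem:bop:prop}, so each successive graph $G'$ remains biconnected outerplanar whenever $G$ is. Conversely, if all counter checks succeed and the algorithm terminates with a single edge, a repeated application of Lemma~\ref{lem:bop:proof}, whose Property~$2$ is certified by Lemma~\ref{lem:PeCounter} together with the $P_{e}$-based checks, shows that $G$ is biconnected outerplanar. Lemma~\ref{lem:bop:good} guarantees that as long as $G$ is biconnected outerplanar a good closed chain exists, so neither phase runs out of candidates prematurely.

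For the space bound I would invoke the initial check $m \le 2n-3$ so that from then on $m = O(n)$. Using static-space allocation and per-vertex choice dictionaries $C_{v}$, the representation of $G'$, the degree counters, and the \hbox{\textit{tried}} markers together fit into $O(n)$ bits. Each counter $P_{e}$ only needs to distinguish $\{0,1,2,{\ge}3\}$, so all counters together use $O(m)=O(n)$ bits. By Corollary~\ref{cor:dual:leafbound} and Lemma~\ref{lem:artifEdges} at most $2q=O(n/\log n)$ artificial edges and shortcuts coexist at any moment, so by Theorem~\ref{thm:ragged-dictionary} (with $\kappa=O(n/\log n)$ and $b=O(\log n)$) they can be stored in a ragged dictionary using $O(n)$ bits, accompanied by an $n$-bit vector marking which vertices currently host such an endpoint.

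For the running time I would split into phases. Phase~$1$ performs $\Theta(\log\log n)$ rounds, each of cost $O(n)$ by Lemma~\ref{lem:timePerRound}, hence $O(n\log\log n)$ in total. Phase~$2$ I would bound by adapting the coin argument of Lemma~\ref{lem:timePerRound}: every vertex is drawn from $\hbox{\textit{D}}$ only after its degree has dropped to $2$; the shortcuts let us bypass the interior of a previously processed non-good chain in constant time whenever a neighboring chain disappears; and Lemma~\ref{lem:bop:nothree2} bounds the number of degree-$2$ neighbors of any vertex. This yields $O(n)$ traversal work in Phase~$2$. Finally, each ragged-dictionary operation costs $O(\log\log n)$ and by Lemma~\ref{lem:artifEdges} there are only $O(n/\log n)$ of them, contributing $O((n/\log n)\log\log n)=o(n)$. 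Summed up, the algorithm runs in $O(n\log\log n)$ time.

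The hard part will be turning the informal Phase~$2$ traversal argument into a clean amortization: one must argue carefully that the shortcuts really do allow constant-time skips of non-good chains already processed, and that the amortized bookkeeping of Lemma~\ref{lem:timePerRound} still works once vertices may be re-inserted into $\hbox{\textit{D}}$ whenever the removals in any of the three cases of Phase~$2$ trigger new degree drops at the endpoints of a chain.
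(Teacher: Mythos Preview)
Your outline for correctness and space is essentially the paper's argument, and your Phase~$1$ time bound is exactly right. Two points in your Phase~$2$ analysis deserve attention.

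First, the coin argument from Lemma~\ref{lem:timePerRound} is not needed in Phase~$2$, and the paper does not use it. The reason Phase~$1$ needed an amortization is that a \emph{closed} chain may have \emph{both} endpoints of large degree, so deciding closedness can cost $\Theta(\deg)$ for a large-degree vertex. In Phase~$2$ we only ever test closedness once we already know the chain is \emph{good}, i.e., one endpoint has degree at most~$4$; moreover artificial edges are outer edges and hence can never witness closedness. Thus the closedness test is $O(1)$ per chain. Together with the observation that every degree-$2$ vertex is either deleted or hidden behind a shortcut the first time its chain is processed, this gives $O(n)$ traversal work directly---no amortized bookkeeping required. So what you flag as ``the hard part'' actually dissolves.

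Second, your bound on ragged-dictionary operations is incorrect as stated. Lemma~\ref{lem:artifEdges} bounds the number of artificial edges and shortcuts that exist \emph{simultaneously} by $2q=O(n/\log n)$; it does \emph{not} bound the total number of insertions, deletions, and lookups over the whole run of Phase~$2$. The paper argues separately that there are $O(n)$ accesses to artificial edges (bounded by the number of edge accesses overall) and that the total number of shortcuts ever created is $O(n)$ (since all shortcuts together can be embedded in inner faces of an outerplanar drawing), with each shortcut queried $O(1)$ times before being superseded. Hence the dictionary work is $O(n\log\log n)$, not $o(n)$. This does not change the final $O(n\log\log n)$ bound, but your justification of that term relies on a misreading of Lemma~\ref{lem:artifEdges}.
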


\begin{proof}
It remains to show the running time of our algorithm.
Recall that in outerplanar graphs
the number of edges~$m$
is at most $2n-3$,
where~$n$ denotes the number of vertices.
It follows
that we can assume throughout our algorithm
that~$O(m)=O(n)$.
In both phases of our algorithm
there are two main factors
that have to be considered
when determining the time bound,
the time to traverse chains
or rather the total time used
to find the endpoints of a chains,
and the total time required
to iterate over the adjacency arrays. 

{\bf{Time bound of Phase~$1$.}}
Recall that, given a vertex $v$ of degree 2, 
the algorithm has to identify
all vertices of the
chain $C$ that contains $v$---possibly, the 
algorithm stops earlier due to a vertex marked as tried.
Afterwards, the algorithm has to check if $C$
is closed by iterating through the adjacency array of one endpoint of $C$.

The time used to traverse chains
during each round of the first phase
is~$O(n)$
since we consider
each vertex of degree~$2$
that is part of a chain~$C$
at most twice,
once, when a vertex
that is part of~$C$
is extracted
from the choice dictionary~$\hbox{\textit{D}}$,
and once again,
if the degree
of one of the endpoints of~$C$
is lowered to~$2$
due to the removal of another chain.
Thereafter,
the vertices of degree~$2$
within~$C$
have either been removed
from the graph
or marked as~\hbox{\textit{tried}}.

Lemma~\ref{lem:timePerRound} shows that 
the total time
that is used in each round
to iterate through the adjacency arrays
of vertices is linear in the size of the given graph,
and therefore, the total running time
for Phase~$1$ of our algorithm
is~$O(n\log\log n)$.

{\bf{Time bound of Phase~$2$.}}
Recall that in Phase~$2$
the algorithm repeatedly extracts
a vertex~$u$
of degree~$2$
that is part of a chain~$C$
from the choice dictionary~$\hbox{\textit{D}}$.
To find out if a good chain is closed
the algorithm has to traverse an adjacency array.
Since artificial edges are edges
that incident to the outerface
and since such edges can not be used to close a chain,
the algorithm has to consider only at most $4$ original edges,
which can be done in O(1) time
and, thus, is negligible. %

We next bound the running time for the three cases
of the algorithm.
In the first and second case, 
apart from $O(1)$ vertices (endpoints and their neighbors), we remove the vertices that we consider.
In the 
last case,
where~$C$ is not good,
we remove the vertices of degree~$2$
from the choice dictionary~$\hbox{\textit{D}}$
and insert a shortcut
between vertices of degree~$2$ in~$C$
that are adjacent to the endpoints
of~$C$.
Due to the shortcut, we never have to consider the vertices of $C$ 
again
except for
its endpoints and their neighbors. 
If we neglect the time to use an artificial edges or a shortcut,
we can asymptotically bound
the number of accesses to (artificial and original) edges and the running time
by the sum of\vspace{-1mm}
\begin{itemize}
\setlength{\itemsep}{0pt}
\item the number of removed vertices, which
is $O(n)$, and 
\item the number of chains that we consider, which is also $O(n)$,
since we remove a vertex of degree $2$ from further consideration 
in the algorithm (either
explicitly by deleting it or implicitly by
introducing a shortcut)
whenever we consider a chain.\vspace{-1mm}
\end{itemize}
We next bound the time to access 
artificial edges, which is easy since 
we have $O(n)$ such accesses and each access runs in $O(\log\log n)$ time. 
As a consequence, the total time for the accesses to artificial edges is $O(n\log\log n)$.

At last we consider 
shortcuts.
Each shortcut
is queried only a constant number of times
before being removed or replaced
since each shortcut is only accessed
after the chain~$C$
to which the shortcuts belonged
became good or
was merged to a bigger chain~$C'$.
Thus, either the inner vertices of the chain
including the shortcuts are removed
or the shortcuts are replaced
by shortcuts for~$C'$.
We finally show
that the total number of 
shortcuts is $O(n)$.
Since the shortcuts can be embedded in inner faces
of an outerplanar embedding of the given graph,
the obtained graph is outerplanar, i.e.,
the total number of original edges
plus the number of shortcuts is~$O(n)$.
Thus, the accesses to shortcuts run in $O(n\log\log n)$ total time.
\end{proof}

\subsection{Algorithm for General Outerplanar Graphs}

We next sketch the generalization of our recognition algorithm of
biconnected outerplanar graphs to general outerplanar graphs.
Since a graph is outerplanar exactly if all of its biconnected components
are outerplanar, we
can iterate over the biconnected components of a given graph $G$
using our framework of Section~\ref{sec:bc}---to avoid 
using $\Omega(m)$ time or bits if a non-outerplanar, 
dense graph is given, one should initially check that 
$m \le 2n-3$.
For each biconnected component, 
we first stream all its vertices and build an initially empty choice
dictionary $C_{v}$ with~$\mathrm{deg}_{G}(v)$ keys for each such vertex $v$, then stream the edges of the biconnected
component and fill in the indices of each edge in the choice       
dictionaries of its endpoints. In addition, %
we compute and store for each vertex of
the biconnected component its degree. %

Recall that our subgraph~$G'$
is represented by a choice dictionary
that contains those vertices of~$G$
that are still part of~$G'$
and a choice dictionary~$C_{v}$
for every vertex~$v$
such that $i \in \{1, \ldots, deg_{G}(v)\}$
is present in~$C_{v}$
exactly if the $i$th entry
of the adjacency array of~$v$
contains an edge
that is still present~$G'$.
We initialize these choice dictionaries
with the values that are streamed
from the algorithm
that determines biconnected components
to initialize a representation
of a biconnected subgraph.
The time to test
if the subgraph is biconnected outerplanar
is bounded by the number of vertices and edges
of the subgraph alone. %

Finally note that a check if a outerplanar graph $G=(V,E)$ is maximal outerplanar can
be easily performed by checking if $2|V|-3=|E|$.

\begin{theorem}
 There is an algorithm that,  
 given an $n$-vertex graph~$G$
 in an adjacency array representation
 with cross pointers, runs in time $O(n\log\log n)$ and uses $O(n)$ bits
 of working space, and
 determines
 if~$G$ is (maximal) outerplanar.
\end{theorem}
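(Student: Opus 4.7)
The plan is to reduce the general case to the biconnected case handled in Lemma~\ref{lem:biOuter}. Since a graph is outerplanar if and only if each of its biconnected components is outerplanar, I will iterate over the biconnected components of the input, feed each one to the algorithm of Lemma~\ref{lem:biOuter}, and answer ``yes'' exactly if every component passes the test. For the maximality check, I will exploit the classical fact that a connected outerplanar graph on $n \ge 2$ vertices is maximal outerplanar iff $m = 2n-3$, so a single edge count (with connectivity handled via the biconnected-component framework) suffices.

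First I would verify that $m \le 2n-3$: if the input has more edges, it cannot be outerplanar and we terminate. This can be done while streaming the adjacency arrays using $O(\log n)$ bits for a counter and $O(n+m)$ time; afterwards we may assume $m = O(n)$, which is what lets every $O(n+m)$-bit subroutine fit within the $O(n)$-bit budget. Next I would build the biconnected-components data structure from Section~\ref{sec:bc}, which initializes in $O(n+m)=O(n)$ time using $O(n+m)=O(n)$ bits and, given an edge, enumerates the vertices and edges of its biconnected component in time linear in the output. Using the cut vertices already at hand together with a bit vector over $V$ marking which vertices have already been assigned to a biconnected component, I can walk through one representative edge per biconnected component (e.g., by scanning edges in the adjacency-array order and skipping those whose component was already processed).

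For each biconnected component $B$ with vertex set $V_B$ and edge set $E_B$ so enumerated, I would build the local representation that Lemma~\ref{lem:biOuter} expects: a choice dictionary over $V_B$ (vertices currently in the working subgraph), a degree counter per vertex, and for every $v \in V_B$ a choice dictionary $C_v$ with universe $\{1,\dots,\deg_G(v)\}$ whose set entries are the adjacency-array indices of the edges of $E_B$ incident to $v$. All of these are initialized in $O(1)$ time per structure by Theorem~\ref{thm:cd}; the indices are filled by a single pass over the edges of $B$ reported by the data structure of Section~\ref{sec:bc}. Then I would invoke Lemma~\ref{lem:biOuter} on $B$ and, on completion, dismantle the local structures so that only the bit vector of processed vertices and the global biconnected-components data structure survive.

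The main obstacle I foresee is a careful accounting of time and space to stay inside the stated bounds. For the space, the key observation is that only one biconnected component is alive at a time: the local structures for $B$ use $O(|V_B| + \sum_{v \in V_B}\deg_G(v))$ bits, which is $O(n + m) = O(n)$ since $\sum_{v \in V_B}\deg_G(v) \le 2m$; adding the $O(n)$ bits of the biconnected-components data structure keeps the total at $O(n)$ bits. For the time, cut vertices may lie in several biconnected components, but each edge belongs to exactly one, and each vertex appears in a biconnected component only at most once per incident edge, so $\sum_B |V_B| + \sum_B |E_B| = O(n+m) = O(n)$; Lemma~\ref{lem:biOuter} contributes $O(|V_B|\log\log|V_B|)$ per component, summing to $O(n\log\log n)$. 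Finally, for the maximal test I would, after confirming outerplanarity and that $G$ is connected (single biconnected-component traversal plus an isolated-vertex check, both cheap to record during the scan above), compare $m$ with $2n-3$.
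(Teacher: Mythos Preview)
Your proposal is correct and follows essentially the same route as the paper: first verify $m\le 2n-3$, then iterate over the biconnected components via the framework of Section~\ref{sec:bc}, build for each component the local choice-dictionary representation, invoke Lemma~\ref{lem:biOuter}, and finally decide maximality by the edge count $m=2n-3$. One small caveat: a bit vector over~$V$ is not quite enough to detect already-processed components, since cut vertices lie in several biconnected components; marking edges instead (still $O(m)=O(n)$ bits) or, equivalently, using the half-marked/unmarked tree edges from Section~\ref{sec:bc} as one representative per component fixes this without affecting your bounds.
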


\phantomsection
\addcontentsline{toc}{chapter}{Bibliography}
\bibliography{main}

\end{document}